\documentclass[12pt]{article}
\usepackage{amsmath}

\usepackage{amssymb}
\usepackage{amsthm}
\usepackage[all]{xy}
\usepackage{enumerate}
\usepackage{bbm}
\usepackage{bm}
\usepackage{mathtools}
\usepackage{tikz-cd}
\usepackage{graphicx}
\usepackage{hyperref}
\usepackage{accents}
\usepackage{todonotes}

\renewcommand{\indent}{\hspace{1em}}

\theoremstyle{theorem}
\newtheorem{Theorem}{Theorem}[section]
\newtheorem{Proposition}[Theorem]{Proposition}
\newtheorem{Lemma}[Theorem]{Lemma}
\newtheorem{Corollary}[Theorem]{Corollary}
\newtheorem{Propdef}[Theorem]{Proposition and Definition}

\theoremstyle{definition}
\newtheorem{Definition}[Theorem]{Definition}
\newtheorem{Remark}[Theorem]{Remark}
\newtheorem{Example}[Theorem]{Example}

\newcommand{\btm}{\begin{Theorem}}
\newcommand{\etm}{\end{Theorem}}

\newcommand{\ben}{\begin{enumerate}}
\newcommand{\een}{\end{enumerate}}

\newcommand{\bit}{\begin{itemize}}
\newcommand{\eit}{\end{itemize}}

\newcommand{\bca}{\begin{cases}}
\newcommand{\eca}{\end{cases}}

\newcommand{\bre}{\begin{Remark}\rm}
\newcommand{\ere}{\end{Remark}}

\newcommand*{\bbm}{\begin{Remark}}
\newcommand*{\ebm}{\end{Remark}}

\newcommand{\ble}{\begin{Lemma}}
\newcommand{\ele}{\end{Lemma}}

\newcommand*{\bsz}{\begin{Proposition}}
\newcommand*{\esz}{\end{Proposition}}

\newcommand{\beq}{\begin{equation}}
\newcommand{\eeq}{\end{equation}}

\newcommand{\bbma}{\begin{bmatrix}}
\newcommand{\ebma}{\end{bmatrix}}

\newcommand*{\bbs}{\begin{Example}}
\newcommand*{\ebs}{\end{Example}}

\newcommand*{\bfg}{\begin{Corollary}}
\newcommand*{\efg}{\end{Corollary}}

\newcommand*{\bdf}{\begin{Definition}}
\newcommand*{\edf}{\end{Definition}}

\newcommand*{\bbw}{\begin{proof}}
\newcommand*{\ebw}{\end{proof}}

\newcommand*{\bpf}{\begin{proof}}
\newcommand*{\epf}{\end{proof}}

\newtagform{simple}{}{}
\newtagform{standard}{(}{)}

\allowdisplaybreaks

\newcommand{\CC}{{\mathbb{C}}}

\newcommand{\NN}{{\mathbb{N}}}

\newcommand{\RR}{{\mathbb{R}}}
\newcommand{\bbR}{{\mathbb{R}}}

\newcommand{\C}{\mathbb{C}}
\newcommand{\R}{\mathbb{R}}

\newcommand{\N}{\mathbb{N}}

\newcommand{\Gl}{{\rm GL}}
\newcommand{\Real}{{\rm Re}}

\newcommand{\U}{{\mr{U}}}

\DeclareMathOperator{\Ad}{Ad}

\DeclareMathOperator{\tr}{tr}

\newcommand{\mf}[1]{\mathfrak{#1}}
\newcommand{\mr}[1]{\mathrm{#1}}

\newcommand{\comment}[1]{}
\newcommand{\verweis}[1]{}
\renewcommand{\d}{{\mr d}}

\newcommand{\tg}{\mr T}

\newcommand{\g}{\mathfrak{g}}

\DeclareMathOperator{\End}{End}

\newcommand{\calC}{\mathcal{C}}
\newcommand{\calD}{\mathcal{D}}

\newcommand{\calH}{\mathcal{H}}

\newcommand{\calJ}{\mathcal{J}}

\newcommand{\calP}{\mathcal{P}}
\newcommand{\calQ}{\mathcal{Q}}
\newcommand{\calS}{\mathcal{S}}
\newcommand{\calZ}{\mathcal{Z}}

\newcommand{\id}{\operatorname{id}}

\newcommand{\Mat}{{\rm M}}

\newcommand{\hermitian}{\mathsf{H}}
\newcommand{\poshermitian}{\mathsf{H}^+}
\newcommand{\densityops}{\mathfrak{D}}
\newcommand{\imi}{\mr i}

\newcommand{\inprodHS}[1]{\langle #1 \rangle_{\textup{\tiny HS}}}

\numberwithin{equation}{section}

\begin{document}

\title{Stratification of the space of density matrices via invariant theory and Uhlmann's connection}

\author{H.~Jordan, M.J.~Pflaum, G.~Rudolph}

\maketitle

\begin{abstract}
\noindent
The space of $n \times n$ density matrices is the full state space of an
$n$-level quantum system. While it has been  asserted in the literature
on quantum information theory that decomposing this state space into equirank
pieces yields a Whitney
stratification, rigorous proofs have remained absent. Using methods from
invariant theory, this work provides a proof that the rank
decomposition of the matrix algebra constitutes a (b)-regular
stratification in the sense of Whitney. Furthermore, we demonstrate how the
Uhlmann connection - which describes the adiabatic holonomy of mixed states
and reduces to the Berry connection on pure states - can be geometrically treated
as a stratified connection on a stratified vector bundle.
Finally, our appendix provides a self-contained exposition of stratified fiber bundles and the new concept of stratified connections, which may be of independent interest.
\end{abstract}

\newpage

\tableofcontents


%
\section {Introduction}

Consider a quantum system described by a Hamiltonian $\hat H$ being in a stationary (pure) state of $\hat H$. If  $\hat H$ is 
slowly altered along a path in parameter space (describing the environment), then by the adiabatic theorem the system 
will remain in an eigenstate of $\hat H$ for any instant of time. In particular, if $\hat H$ returns to its original form, which corresponds to 
some closed path in parameter space, then the system returns to a state which coincides with the original state up to a phase factor. 
In 1984, Berry  \cite{Berry84} observed that,  in addition to the standard dynamical phase factor, this phase factor contains 
a path-dependent phase, which people used to call Berry's phase. Berry found a general formula for that phase and 
discussed essentially two examples: spins in magnetic fields and the Aharonov-Bohm effect. 
Immediately after the publication of this work, Simon \cite{Simon} clarified the geometric meaning of Berry's  phase. He observed that this phase may be described in terms of the holonomy of a connection in a Hermitian line bundle. 

\indent Over the years, there have been many applications of Berry's phase in various branches of quantum physics,
see \cite{Zwanziger}, \cite{Wang} and the references therein. 

\indent Two years after the appearance of Berry's work, Uhlmann \cite{AU-1} observed that there is a natural 
generalization of Berry's phase to the case of mixed states. In particular, he found a natural connection
\cite{AU-2} whose holonomy governs the adiabatic dynamics along closed paths.
It turns out that Uhlmann's connection is closely related to the concept of purification of densitiy matrices. Thus, let $\mathcal{H}$ 
be the Hilbert space of a quantum system with scalar product $\langle \cdot, \cdot \rangle$. Recall that a density operator $\rho$ is a positive trace class operator with trace one. It defines a mixed state, that is, a positive definite normalized functional on the algebra $\mf A = \mf B (\mathcal{H})$  of bounded operators on
$\mathcal{H}$ by
\[
a \mapsto \rho(a) = \tr(a \rho)  \, .
\]
Applied to a selfadjoint element $a \in \mf A$ or in other words to an observable of the system, the value $\rho (a)$ is interpreted as the expectation value of the observable $a$ for  the system prepared  in the state $\rho$.  
A purification is a mapping which assigns to every mixed state a pure state belonging to some extended Hilbert space $\mathcal{H}^{\rm ext}$  in such a way that expectation values are preserved. The standard choice of a purification is provided by taking
$$
\mathcal{H}^{\rm ext} := \mathcal{H}^\ast \otimes \mathcal{H} \, ,
$$
with the purification $\Phi \in \mathcal{H}^{\rm ext}$ of $\rho$ uniquely determined by the requirement  that
$$
\tr(a \rho) = \langle (\mathbbm 1 \otimes a) \Phi, \Phi \rangle_{\rm ext}  \quad \text{for all }  a \in  \mf A  \ .
$$
Here, $\langle \cdot, \cdot \rangle_{\rm ext}$ is the natural scalar product on $\mathcal{H}^{\rm ext}$ induced from 
$\langle \cdot , \cdot \rangle$. Note that $\mathcal{H}^\ast \otimes \mathcal{H}$ may be naturally identified with the space ${\cal L}^2 (\mathcal{H})$ of Hilbert-Schmidt operators
on $\mathcal{H}$.  Correspondingly, instead of $\langle \cdot, \cdot \rangle_{\rm ext}$ we will 
write $\inprodHS{\cdot, \cdot}$ in that case. 
Moreover, let us denote the space of all density operators on $\mathcal{H}$ by
$\densityops(\mathcal{H})$  or just $\densityops$ and by ${\cal L}^2 (\mathcal{H})_{\neq 0}$ the space of all non-zero Hilbert-Schmidt operators on $\mathcal{H}$. 
Note that in the finite dimensional case, $\densityops$ is topologically a  manifold with boundary.

\indent Uhlmann observed in \cite{AU-1,AU-2} that ${\cal L}^2(\mathcal{H})_{\neq 0}$ is endowed with the structure of a fibered space over $\densityops$, meaning there exists a canonical surjective continuous map
$\hat{\pi}:{\cal L}^2(\mathcal{H})_{\neq 0} \to \densityops$ called the \emph{projection}.
This projection is given by 
\begin{equation}
\label{Fibr-W-Omega}
\hat{\pi}:{\cal L}^2(\mathcal{H})_{\neq 0} \to \densityops\, , \quad a \mapsto \rho:= \frac{a \, a^\ast}{\inprodHS{a,a}} \ .
\end{equation}
Next, Uhlmann found that there is a natural parallel lift operation for $\hat{\pi}$, given by the following (implicitly 
defined) $1$-form $\bf A$  on ${\cal L}^2(\mathcal{H})_{\neq 0}$:
\begin{equation}
\label{UhlmannEqn}
a^\ast \d a -(\d a^\ast) a = a^\ast a {\bf A} + {\bf A} a^\ast a \, .
\end{equation}
Moreover, he observed that on the open and dense submanifold of invertible $a$, $\bf A$ is a connection $1$-form. 
Finally, Uhlmann considered evolution equations on the space of density matrices
\beq
\label{EvEqu-DensM}
 \dot{\rho} = \frac{1}{i \hbar} [\hat H, \rho]\, ,
\eeq
where $\hat H$ is some self-adjoint Hamiltonian of the system. 
In this context, he studied parallel lifts of solutions $t \mapsto \rho_t$ to the purification space
${\cal L}^2(\mathcal{H})_{\neq 0}$. 

Nowadays, the theory of density matrices is a major field of research. We refer the reader to \cite{Bengtsson,BengtssonZyczkowski}
which can serve as a quite exhaustive introductory source. In particular, the reader will find a large number of references therein. One interesting line of research is the study of the geometry of the space  $\densityops_n$ of $(n\times n)$-density matrices, see \cite{Ciaglia,Chruscinski,DittmannUhlmann,D-1,D-2,D-3}. 
Another, closely related subject is the study of the stratification structure, see \cite{Bruening,Budich,Chruscinski,AndreaFranco,DR-1,DR-2}. 
As a rule, the finite-dimensional case is studied, but there are attempts to deal with the infinite-dimensional case as well, see 
\cite{CiagliaJost} and references therein.

In the present paper, we limit our attention to the finite-dimensional case. Then,  $ {\cal H}^\ast \otimes {\cal H}$ may be identified with the space $\Mat_n $ of complex $(n\times n)$-matrices and, thus,  the space of non-normalized density matrices may be identified 
with the subspace $\poshermitian_n \subset \Mat_n$ of all positive semi-definite Hermitian $(n\times n)$-matrices. In this context, 
we are going to present a novel result about the space of $(n\times n)$-density matrices  $\densityops_n$: we will prove that 
$\densityops_n$ is naturally stratified by rank and that this  stratification is {\rm (}b{\rm )}-regular. This result was accomplished by using the following steps:
\begin{enumerate}
\item
Building on earlier results obtained in \cite{DR-1}, we show that the space  $\poshermitian_n$ is naturally homeomorphic to the orbit space of the right $\U(n)$-action on $\Mat_n$. Under this isomorphism, the projection
$\pi_n: \Mat_n \to \poshermitian_n $ corresponds to the orbit map, cf. Propositions \ref{S-Whitney1}, \ref{S-OrbitMap} and 
\ref{S-HSD-Stratification}.

\item
We construct a Hilbert basis for invariant polynomials for the right (or left) action of $\U(n)$ on $\Mat_n$ based on divisibility arguments dating back to Weyl \cite{Weyl.TheClassicalGroups}. As we are dealing here with real algebraic varieties, standard arguments using Hilbert's Nullstellensatz do not work. Using this, together with point 1, one obtains  the above statement for  $\poshermitian_n $, cf. 
Corollary \ref{Corollary1}.

\item
Extending the result from $\poshermitian_n $ to  $\densityops_n$ is then by standard arguments, cf. Corollary \ref{Corollary2}.
\end{enumerate}

We stress that our approach based on the purification stratification substantially differs from considering decompositions of 
the space of density matrices induced by passing to equivalence classes defined by the adjoint group action (which is considered frequently). In short, in our approach, the decomposition by rank is obtained directly according to the natural right $U(n)$-action on the purification space. In contrast, each stratum  of the adjoint action consists of all density matrices that share the same eigenvalues (up to permutation). The stratification so obtained is a refinement of the purification rank stratification: if one collects all the individual adjoint orbits that share the same number of non-zero eigenvalues, their union form one purification rank stratum. Moroever, the strata of the adjoint action do not coincide with the strata defined by rank; in particular, in general they are not submanifolds.  We refer to \cite{AndreaFranco} for a discussion of this issue. 
In this paper, the reader can also find a construction of the Whitney stratification by rank which does not use the purification picture. Moreover, the authors of \cite{AndreaFranco} also discuss the conic structure around the points of each stratum, but they do not prove the (much more intricate) Whitney (b)-regularity.

To complete the picture, in Section \ref{UhlmannConn-finD} we explain Uhlmann's connection in this finite-dimensional setting. 
In particular, we also show how this connection looks like on the non-generic strata. For a whole class of connections containing 
Uhlmann's connection as a special case, we refer to \cite{DR-2}. 

Finally, in Section \ref{Outlook} we comment on the importance und usefulness of the Whitney {\rm b}-regularity and we also make some 
remarks  on the infinite-dimensional case. In the appendix, we provide the reader with a novel stratification theory for 
spaces of density operators.  
Moreover, we include an exposition of the theory of stratified   vector bundles and stratified connections (latter new).

\paragraph{Acknowledgements.} M.J.P.\ acknowledges support by the National Science Foundation under the awards DMS 2055501
and DMS 2406319.  
He would also like to thank the University of Leipzig for its hospitality during his research visits.


\section{The stratification by rank of finite-dimensional matrix algebras}
\label{sec:FiniteDimensionalMatrixAlgebras}

Here, we discuss the case of density matrices of finite dimension. 
In parts, our presentation is along the lines of \cite{DR-1}.

Let  ${\cal H}$ be the Hilbert space  $\C^n$. Then,  ${\cal H}^\ast \otimes {\cal H}$ naturally coincides
with the space $\End(\C^n)$ of linear endomorphisms and we have 
\beq
\label{NatScPr-Mn}
\inprodHS{v, w} = \tr(v^\ast w) \, . 
\eeq
Choosing the standard basis in $\C^n$ (or any other orthonormal basis), one obtains an isomorphism
$$
 {\cal H}^\ast \otimes {\cal H} \cong \Mat_n \, ,
$$
where $\Mat_n$ denotes the space of complex $n\times n$ matrices. In the sequel, this identification will
be used. 
Let us denote by $\poshermitian_n \subset \Mat_n$ the subspace of all positive semi-definite Hermitian $(n\times n)$-matrices which we also
sometimes refer to as the space of \emph{non-normalized density matrices} in $\Mat_n$. 
Note that we include the zero matrix in this space for now.
The normalization 
condition will be imposed at the end of this section. Thus, here, the projection \eqref{Fibr-W-Omega} reads as follows:
\beq
\label{Fibr-M-D}
\pi_n: \Mat_n \to \poshermitian_n \, , \quad a \mapsto \rho:= a a^\ast \, .
\eeq
Note that the graph of this projection is the union of the graphs of the fiber bundles
\beq
\label{Fibr-M-D-strat}
\pi_{n,k} : \Mat_n(k) \to \poshermitian_n(k)\, , \quad a \mapsto \rho:= a a^\ast \ ,
\eeq
where  $\Mat_n(k)$ and $\poshermitian_n(k)$ denote the manifolds of rank $k$ matrices and non-normalized rank $k$ density matrices, respectively. 
In other words, the subspaces of rank $k$ matrices yield subprojections which can be interpreted as fiber bundles.
In the following, we describe this decomposition and the fiber bundles $\pi_{n,k}: \Mat_n(k) \to \poshermitian_n (k)$ in more detail.
Fundamental for this is the right $\U(n)$-action 
\[ \Mat_n\times \U(n)\to \Mat_n ,\quad  (a,u) \mapsto au \ . \]
By right polar decomposition, 
every element $a\in \Mat_n$ can be factorized in the form 
\begin{equation}
   a = |a^*| u \ , 
\end{equation}
where  $|a^*| = \sqrt{aa^*}$ and where $u \in \U(n)$.
The matrix  $|a^*| $ is Hermitian and positive semi-definite and has the same rank $k$ as $a$, so for $a\neq 0$ we have that $|a^*| $ is unitarily equivalent to a diagonal matrix of the  form
\[
  \begin{pmatrix}
  \lambda_1  & 0 & 0 & 0 & 0 & \ldots  & 0 \\
  0 & \lambda_2 & 0 & 0 & 0 & \ldots   & 0 \\
  \vdots &  & \ddots &  & \vdots  & & \vdots \\
  0 & \ldots & 0 & \lambda_k & 0 &\ldots & 0 \\
  0 & \ldots  & 0 & 0 & 0 & \ldots & 0 \\
  \vdots & & \vdots & \vdots & \vdots & \ddots & \vdots \\
  0 & \ldots & 0  & 0 & 0 & \ldots & 0
  \end{pmatrix}
\]
where $\lambda_1 \geq \lambda_2 \geq \ldots \geq \lambda_k >0$.  The isotropy group of such a diagonal matrix under the $\U(n)$-action 
consists of all unitaries leaving the top left $(k\times k)$-block, the bottom right
$((n-k)\times (n-k))$-block of zeros and the upper right $k \times (n-k))$-block of zeros invariant and
therefore coincides with the image of $\U(n-k)$ in $\U(n)$ under the embedding 
$u \mapsto \begin{pmatrix} \mathbbm{1}_{k} & 0 \\ 0 & u \end{pmatrix}$. 
One concludes that the set $\Mat_n (k)$ of rank $k$ matrices coincides with the set $\Mat_{(U(n-k))} $ of all
elements of $\Mat_n$ whose isotropy groups are conjugate to $\U(n-k)$. By the theory of transformation groups, the latter set is a submanifold of the $\U(n)$-manifold $\Mat_n$; see e.g.~\cite{Bredon}, \cite{JaenichDM} or \cite[Cor.~4.2.8]{PflaumBook}. 
Using right polar decomposition and the diagonalizability of Hermitian matrices again one concludes that
\begin{equation}
  \label{eq:matrix_rank_condition_frontiers}
  \Mat_n (l) \subset \overline{\Mat_n(k)} \quad  \text{for } l \leq k \ 
\end{equation}
hence the decomposition of $\Mat_n$ into submanifolds of matrices of a fixed rank fulfills the condition of frontier and thus is a stratification.   
Since the stratification by rank coincides with the stratification by orbit type,  it fulfills Whitney's condition
(b) by \cite[Thm.~4.3.7]{PflaumBook}. To summarize, we have the following.

\bsz
\label{S-Whitney1}
The decomposition of $\Mat_n$ into submanifolds by rank is a stratification and $(b)$-regular in the sense of Whitney.  
\esz

%
The following crucial observation allows to apply the theory of transformation groups in our setup.
\bsz
\label{S-OrbitMap}
The space  $\poshermitian_n$ of positive semi-definite Hermitian matrices is naturally homeomorphic to 
the orbit space of the right $\U(n)$-action on $\Mat_n$. Under this isomorphism, the projection
$\pi_n: \Mat_n \to \poshermitian_n $ corresponds to the orbit map. 
\esz
\bpf
Denote by $X_n = \Mat_n/\U(n)$ the orbit space and by
$\sigma_n: \Mat_n \to X_n$ the canonical projection.  Since the map $\pi_n$ is invariant under the right $\U(n)$-action, one obtains a commutative diagram
\[
\begin{tikzcd}
\Mat_n \arrow[r,"\sigma_n"] \arrow[rd,"\pi_n",swap]& X_n \arrow[d,"\varphi_n"]\\
& \poshermitian_n 
\end{tikzcd}
\]
By right polar decomposition, the map $\poshermitian_n \to X_n, \: \rho \mapsto \sigma_n (\sqrt{\rho})$ then is a continuous inverse to the map $\varphi_n: X_n \to \poshermitian_n$. 
\epf

Let us  come back to the decomposition of the projection
$\pi_n:\Mat_n\to \poshermitian_n$ into the  bundles
$\pi_{n,k}: \Mat_n(k) \to \poshermitian_n(k) $. 
One concludes from
\eqref{eq:matrix_rank_condition_frontiers} and continuity of $\pi_n$ that
\begin{equation}
\label{eq:density_matrix_rank_condition_frontiers}
  \poshermitian_n (l) \subset \overline{\poshermitian_n(k)} \quad \text{for } l\leq k 
\end{equation}
which implies that the condition of frontier
holds for the decomposition of $\pi_n$ into the bundles $\pi_{n,k}$, hence it is a stratification, indeed.
The regular, or in other words the open and dense stratum is given by
$\pi_{n,n}: \Gl (n,\CC)  \to \poshermitian_n(n) $. The projections $\pi_{n,k}$ 
with $k < n$ constitute the singular strata.
To describe the structure of the bundles $\pi_{n,k}$, recall from our discussion before that $\Mat_n(k)$ consists of all points of $\Mat_n$ having isotropy type $(\U(n-k))$. By a standard theorem
about proper Lie group actions with one orbit type, see e.g.\ \cite[Thm.\  5.8,  Chpt.\ II]{Bredon}, \cite[Thm.\ 4.3.10]{PflaumBook} or 
\cite[Sec.\ 6.6]{RS-1}, 
the projection $\pi_{n,k}: \Mat_n(k) \to \poshermitian_n(k)$ 
is locally trivial and constitutes a bundle with typical fiber $ \U(n-k)\backslash \U(n)$  and structure group
\[
\Gamma_{\U(n-k)} := N_{\U(n)} (\U(n-k))/\U(n-k)\, ,
\]
where $N_{\U(n)} (\U(n-k))$ denotes the normalizer of $\U(n-k)$ in $\U(n)$.
The bundle $\pi_{n,k}$ is associated with the $\Gamma_{\U(n-k)}$-principal bundle 
\[
\rho_{n,k}: \Mat_{\U(n-k)} \to \poshermitian_n(k), \quad a \mapsto a a^*\ , 
\]
where $\Mat_{\U(n-k)} \subset \Mat_n(k)$ denotes the set of all $(n \times n)$-matrices having isotropy
group $\U(n-k)$. Note that  $\Mat_{\U(n-k)}$ is a proper submanifold of $\Mat_n(k)$. 
One easily checks that for the stabilizer $\U(n-k)$ one has a natural isomorphism
\[
\Gamma_{\U(n-k)} \cong \U(k) 
\]
under the embedding of $\U(k)$ in the upper $(k \times k)$-diagonal block of $\U(n)$.
These observations together  with Proposition \ref{S-OrbitMap} entail the following.
\bsz
\label{S-HSD-Stratification}
In the above notation, we have the following commutative diagram:
\[
\begin{tikzcd}
  \Mat_{\U(n-k)} \times_{\U(k)} \U(n-k) \backslash \U(n) \arrow[rr,"\Phi_{n,k}"]
  \arrow[d,"\pi_{\U(n)/\U(n-k)}",swap] 
  & & \Mat_n (k) \arrow[d,"\pi_{n,k}"] 
\\
X_n \supset \Mat_{\U(n-k)}/ \U(k) \ar[rr,"\varphi_{n,k}"]
 & &
\poshermitian_n(k) \, ,
\end{tikzcd}
\]
where $\varphi_{n,k} := \varphi_n|_{\Mat_{\U(n-k)}/ \U(k)}$ is a diffeomorphism
and $\Phi_{n,k}$ is the bundle isomorphism defined by 
\beq
\label{Iso-Phi}
\Phi ([(a, [u])]) := a u \quad \text{for } a\in  \Mat_{\U(n-k)} \text{ and } u \in \U(n) \, . 
\eeq
\esz

For later purposes, we denote 
\[
E_n(k) :=\Mat_{\U(n-k)} \times_{\U(k)} \U(n-k) \backslash \U(n)\, .
\]

\begin{Remark}
\label{B-GeomStrata}
In the following we discuss the geometry of the strata
in more detail.
\begin{enumerate}[1.]
\item\label{Ite1-B-GeomStrata}
By the above discussion, the elements of  the subset of isotropy type $\Mat_{\U(n-k)}$ are built from $(n \times k)$-blocks, whose colums are linearly independent vectors in $\C^n$, and of $(n \times (n-k))$-blocks whose entries vanish. Thus, $\Mat_{\U(n-k)}$ may be identified with the space of linear $k$-frames $P_n(k) $ in $\C^n$, because any $k$-frame $f$ may be regarded as an injective mapping $f: \C^k \to \C^n$ and thus as an $(n \times k)$-matrix of rank $k$. The space $P_n(k) $ is usually referred to as the non-compact Stiefel 
manifold of type $(n,k)$. Thus, the $\U(k)$-principal bundle $\rho_{n,k}$ may be identified with the 
$\U(k)$-principal bundle of linear $k$-frames over $\poshermitian_n(k)$, 
\[
\lambda_{n,k}: P_n (k) \to \poshermitian_n(k) \, .
\]
Next, let $p_{n,k}: \C^n \to \C^k$ be the canonical projection onto the first $k$ coordinates.
It may be regarded as a $(k \times n)$-matrix fulfilling $p_{n,k} p_{n,k}^\ast = \id_{\C^k}$.
Under the above identifications, the bundle isomorphism $\Phi_{n,k}$ then turns into the mapping
\beq\label{Struct-M-n-k}
  \Phi_{n,k} : P_n(k) \times_{\U(k)}  \U(n-k) \backslash \U(n)  \to \Mat_n (k)\, ,
  \quad [(f,[u])] \mapsto f p_{n,k} u \, ,
\eeq
and the canonical projection $\lambda_{n,k}$ reads
$\lambda_{n,k} (f) = f f^\ast$. Since  $p_{n,k} p_{n,k}^\ast = \id_{\C^k}$, one has  
$$ 
f p_{n,k} u (f p_{n,k} u)^\ast = f f^\ast 
$$ 
showing that $\Phi_{n,k}$ maps fibers into  fibers, indeed. It is also obvious that the restriction of
$\Phi_{n,k}$ to any fiber is a bijection. 
\item\label{Ite2-B-GeomStrata}
  Recall the compact Stiefel bundle
  \[
    \overline{\lambda}_{n,k}: S_n(k) \to G_n(k)\, ,
    f \mapsto f f^* \, ,
  \]
  where $S_n(k)$ and $G_n(k)$ denote the spaces of orthonormal $k$-frames and of rank $k$
  orthogonal projections of $\C^n$, respectively. Since an element $ f \in S_n(k)$ can be regarded
  as an injective isometric map $f :\C^k \to \C^n$ one concludes that the equality 
\beq
\label{Emb-Sn-Pn}
  f^\ast f = \id_{\C^k} 
\eeq
  holds true. Moreover for $0<k<n$ the Stiefel  bundle $\overline{\lambda}_{n,k}$ is non-trivial
  and, even for $k=0,n$, the bundle $\overline{\lambda}_{n,k}$  is  a subbundle of 
  $\lambda_{n,k}: P_n(k) \to \poshermitian_n(k) $ via the identical embeddings 
  of $S_n(k)$ into $P_n(k)$ and of $G_n(k)$ into $\poshermitian_n (k)$.
On the other hand, the mapping $i_{n,k}: P_n(k) \to \Mat_n(k) $ defined by 
\beq
\label{Emb-Pn-Mn}
i_{n,k} (f) := fp_{n,k} \quad \text{for } f \in P_n(k) 
\eeq
is an injective bundle homomorphism. Thus, the compact Stiefel bundle $\overline{\lambda}_{n,k} $ is 
a subbundle of $\pi_{n,k}$. Since $S_n(k)$ is a deformation retract of $\Mat_n(k)$, the bundle
$\pi_{n,k}$ therefore is non-trivial whenever $\overline{\lambda}_{n,k}$ is.
\item\label{Ite3-B-GeomStrata}
  As pointed out before, the regular (dense) stratum is given by the bundle
  \beq
  \label{Fibr-M-D-princStr}
  \pi_{n,n}: \Gl(n, \C) \to  \poshermitian_n(n) \, .
  \eeq
  For every element $a \in \Gl(n,\C)$ the unitary $u$ in the right polar decomposition $a= |a^*|u$ 
  is uniquely determined, hence $\pi_{n,n}$ is a principal fiber bundle with structure group $\U(n)$. 
  Since it admits the global section 
  $\poshermitian_n (n) \to \Gl(n, \C) $, $\rho \mapsto \sqrt{\rho}$, the bundle
  $\pi_{n,n}$ is trivial. Furthermore, the base space $\poshermitian_n (n)$ can be identified with
  the quotient space $\Gl(n,\C)/ \U(n)$ via the natural isomorphism
  \[
           \poshermitian_n (n) \to \Gl(n,\C)/ \U(n), \quad \rho \mapsto \sqrt{\rho} \, \U(n)   \, .
  \] 
\end{enumerate}
\end{Remark}

%


\section{Invariant theory of the right $\U (n)$-action}
\label{Space-densMatr}


To start with, recall from the previous section that the projections $\pi_{n,k}: \Mat_n(k) \to \poshermitian_n(k)$ are fiber bundles and that the stratification of $\Mat_n$ by rank induces the decomposition of $ \poshermitian_n$ into pieces $\poshermitian_n(k)$ of constant rank via the projection $\pi_n:\Mat_n \to \poshermitian_n$.
By \eqref{eq:density_matrix_rank_condition_frontiers}, this decomposition
fulfills the condition of frontier, so is a stratification indeed.
Below, we are going to prove that the stratification of $ \poshermitian_n$ into the
manifolds  $\poshermitian_n(k)$ is actually (b)-regular in the sense of Whitney.
To this end we will apply methods from semialgebraic geometry and invariant theory.
Our first observation is the following result.

\begin{Proposition}
  The spaces $\poshermitian_n$ and $\densityops_n$ of positive semi-definite Hermitian $n\times n$
  matrices and $n\times n$ density matrices, respectivcely, are semialgebraic subsets of $\Mat_n$.
\end{Proposition}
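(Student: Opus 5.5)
We regard $\Mat_n$ as the real vector space $\RR^{2n^2}$, using the real and imaginary parts of the entries $a_{ij} = x_{ij} + \imi\, y_{ij}$ as coordinates. A set is semialgebraic if it is a finite Boolean combination of sets defined by polynomial equalities and inequalities in these coordinates. The plan is to write both $\poshermitian_n$ and $\densityops_n$ explicitly in this form.

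First, the Hermitian matrices $\hermitian_n$ form a real linear subspace. It is cut out by the polynomial equations $x_{ij} = x_{ji}$ and $y_{ij} = -y_{ji}$ for all $i,j$, so it is real algebraic.

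Next comes positive semidefiniteness, the only step that needs a real argument. I would use the criterion that a Hermitian $a$ is positive semidefinite if and only if all its \emph{principal} minors (not only the leading ones) are nonnegative. For $I \subset \{1,\dots,n\}$, the principal minor $\det(a_{II})$ of a Hermitian matrix is real. It is also a polynomial with integer coefficients in the $x_{ij}, y_{ij}$, because the real part of $\det(a_{II})$, viewed as a polynomial in the entries, is polynomial in the real coordinates. Hence
\[
\poshermitian_n = \{ a \in \hermitian_n : \Real \det(a_{II}) \geq 0 \text{ for all } I \subset \{1,\dots,n\} \}
\]
is a finite intersection of polynomial equalities and inequalities.

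There is an alternative route that avoids the minor criterion. Write
\[
\det(t\mathbbm{1} + a) = \sum_{k} c_k(a)\, t^{n-k},
\]
where the $c_k(a)$ are the elementary symmetric functions of the real eigenvalues of $a$. A Hermitian $a$ is positive semidefinite if and only if $c_k(a) \geq 0$ for all $k$. This holds because a real-rooted polynomial whose coefficients alternate in sign in the appropriate way has no negative roots, by Descartes' rule. Either criterion gives the description. A third option is to note that $\poshermitian_n = \pi_n(\Mat_n)$ is the image of $\Mat_n$ under the polynomial map $a \mapsto a a^\ast$, and to invoke the Tarski–Seidenberg theorem. I would mention this as the shortest argument and give the explicit inequalities as the constructive one.

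Finally,
\[
\densityops_n = \poshermitian_n \cap \{ \tr a = 1 \},
\]
and $\tr a = 1$ is the linear equation $\sum_i x_{ii} = 1$. An intersection of semialgebraic sets is semialgebraic, so $\densityops_n$ is semialgebraic as well.

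The main obstacle is justifying the criterion used for positive semidefiniteness. Leading principal minors alone do not suffice in the semidefinite case, so one must use all principal minors or the characteristic polynomial coefficients. The Tarski–Seidenberg argument bypasses this entirely, and I would use it if brevity is preferred.
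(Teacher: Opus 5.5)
Your proof is correct, but your main argument takes a different route from the paper's. The paper argues non-constructively. It realizes $\poshermitian_n$ as the projection to $\hermitian_n$ of the semialgebraic set $\{(a,u,\lambda)\in\hermitian_n\times\Mat_n\times\RR^n \mid uu^\ast=\mathbbm{1},\ uau^\ast=\operatorname{diag}(\lambda_1,\ldots,\lambda_n),\ \lambda_1\geq\cdots\geq\lambda_n\geq 0\}$, which encodes unitary diagonalization with nonnegative eigenvalues, and then applies Tarski--Seidenberg.

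Your principal-minor (or characteristic-coefficient) criterion instead exhibits $\poshermitian_n$ explicitly as a basic closed semialgebraic subset of the real vector space $\hermitian_n$, cut out by finitely many explicit polynomial inequalities. This needs no quantifier elimination and gives more information, namely closedness and the defining inequalities themselves. The price is that it relies on the standard criterion that all principal minors, not only the leading ones, must be tested. The paper's route needs nothing beyond the spectral theorem.

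Your third option, $\poshermitian_n=\pi_n(\Mat_n)$ plus Tarski--Seidenberg, is the shortest of the three and closest in spirit to the paper. It also fits the paper's later viewpoint (Theorem~\ref{thm:HilbertMap}), in which $\pi_n$ is the Hilbert map and $\poshermitian_n$ is its image.

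Two minor points. First, in the characteristic-polynomial route, what you need is this: $\det(t\mathbbm{1}+a)=\prod_i(t+\lambda_i)$ has nonnegative coefficients, so it is at least $t^n>0$ for $t>0$. Hence it has no positive roots, and no $\lambda_i$ is negative. Your phrase about alternating signs and no negative roots describes $\det(t\mathbbm{1}-a)$ rather than the polynomial you defined, and Descartes' rule is not needed.

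Second, your description $\densityops_n=\poshermitian_n\cap\{\tr a=1\}$ is the right one. The paper's proof writes the constraint as $\tr(aa^\ast)=1$ and calls that set linear, which is a slip; the proof of Corollary~\ref{Corollary2} uses $\tr a=1$. Since both sets are semialgebraic, the conclusion is unaffected.
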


\bpf
Obviously, the space $\mathsf{H}_n$ of Hermitian $n\times n$ matrices is a real vector space.
Denoting by $\operatorname{diag} (\lambda_1,\ldots ,\lambda_n)$ the diagonal matrix
with entries $\lambda_1,\ldots,\lambda_n \in \R$, the set
\[
  \left\{  (a,u,\lambda) \in \mathsf{H}_n\times \Mat_n \times \RR^n \mid
    u \, u^* = \mathbbm{1} \ , \: u a u^* = \operatorname{diag} (\lambda_1,\ldots ,\lambda_n) \ ,
    \: \lambda_1 \geq \ldots \geq \lambda_n  \geq 0  \right\}
\]
then is a semialgebraic subset of $\mathsf{H}_n\times \RR$. The projection 
of this set to  $\mathsf{H}_n$ coincides by definition with the space of positive semi-definite Hermitian matrices, so by Tarski--Seidenberg it is semialgebraic.

\indent
The space of $n\times n$ density matrices is the intersection of two semialgebraic sets, namely of
$\poshermitian_n$ and the linear subspace $V= \{ a \in \Mat_n \mid \tr (a \, a^*) =1  \}$,
so is semialgebraic as well. 
\epf

As a semialgebraic set, $\poshermitian_n$ possesses a minimal Whitney stratification by algebraic
manifolds. We will prove in the following that the minimal Whitney stratification is given by the
stratification by rank. 

For the convenience of the reader we recall some basic notions of invariant theory. 
Let $G$ be a compact (or reductive) Lie group and assume that $G$ acts by linear transformations from 
the right on the vector space $\RR^d$. The ring of complex valued polynomials on $\RR^d$ invariant 
under the $G$-action will be denoted by $\C[x_1,\ldots,x_d]^G$. 
By the Hilbert basis theorem, this ring is finitely generated, that is,
there exist invariant complex valued polynomials $p_1,\ldots,p_k$ on $\RR^d$ such that    
\[
  \C[x_1,\ldots,x_d]^G = \C [p_1,\ldots,p_k] \ .
\]
The map 
\[
  p: \RR^d \to \CC^k \ , \quad x \mapsto 
  \begin{pmatrix}
    p_1 (x) \\ \vdots \\ p_k (x)
  \end{pmatrix}
\]
then is referred to as the \emph{Hilbert map} of the $G$-action.
The claim that $p_1,\ldots,p_k$ form a Hilbert basis of invariant polynomials can then be briefly written in the form
\[
   p^* (\C[z_1, \ldots ,z_k] )= \C[x_1, \ldots , x_d]^G \ .
\]
A Hilbert basis and the corresponding Hilbert map are called \emph{minimal} if the Hilbert basis is of minimal length.
The crucial  observation now is that given a Hilbert map $p: \RR^d \to \CC^k$,
the image $X = p(\RR^d)$ can be identified with the orbit space of the
$G$-action and that $X$ carries a natural minimal stratification
whose strata coincide with the images of the orbit type strata of $\RR^d$
under the Hilbert map. Moreover, this stratification is {\rm (}b{\rm )}-regular in the sense of Whitney. It is called the \emph{stratification} of $X$ by \emph{orbit type}.
See Examples \ref{ex:differentiable-stratified-space-structure-orbit-space} and \ref{ex:hilbert.map-whitney-stratification}
in the appendix or \cite{BierstoneOrbitspaces} and \cite[Section 4.4]{PflaumBook} for further details.
The following result is fundamental.

\begin{Theorem}
  \label{thm:HilbertMap}
  The map
  \[
    \pi_n : \Mat_n \to   \poshermitian_n \subset \Mat_n \ , \quad a \mapsto a a^*
  \]
  is a minimal Hilbert map for the right $\U(n)$-action on $\Mat_n$.
\end{Theorem}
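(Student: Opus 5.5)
We must show that the real and imaginary parts of the entries of $aa^*$, i.e.\ the polynomials $(aa^*)_{ij}=\sum_l a_{il}\bar a_{jl}$, generate $\C[\Mat_n]^{\U(n)}$, where $\Mat_n$ is viewed as the real vector space $\R^{2n^2}$. We also must show that no smaller generating system exists.

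The plan is first to complexify. Polynomials on $\Mat_n\cong\R^{2n^2}$ with complex coefficients are the same as polynomials in the independent variables $a_{il}$ and $\bar a_{il}$. The right action $a\mapsto au$ acts on the rows $a_{i\cdot}\in\C^n$ as vectors, and on the rows $\bar a_{i\cdot}$ as covectors, i.e.\ by the contragredient representation. Since $\U(n)$ is Zariski dense in $\Gl(n,\C)$, an invariant polynomial in $(a,\bar a)$ is the same as a polynomial in $n$ vectors $x_i=a_{i\cdot}$ and $n$ covectors $\xi_j=\bar a_{j\cdot}$ that is invariant under $\Gl(n,\C)$ acting by $x\mapsto xg$, $\xi\mapsto g^{-1}\xi$. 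Here we regard $x,\xi$ as independent complex variables. Polynomial identities in $(a,\bar a)$ are equivalent to identities in independent variables, because the real-linear coordinates $\Real\, a_{il}$, $\operatorname{Im} a_{il}$ are related to $a_{il},\bar a_{il}$ by an invertible complex linear change.

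The second step invokes Weyl's first fundamental theorem for $\Gl(n)$ acting on vectors and covectors. It says that all such invariants are polynomials in the contractions $\langle x_i,\xi_j\rangle=\sum_l a_{il}\bar a_{jl}=(aa^*)_{ij}$. This is exactly the point where the paper announces Weyl's divisibility arguments. Since we want to be self-contained, we would argue as follows. Restrict an invariant $f$ to the dense set where $a$ is invertible. For invertible $a$, choose $g=a^{-1}$, so $x=a$ becomes $\mathbbm 1$ and $\xi=\bar a$ becomes $(a a^*)^{T}$ up to the transpose convention. Hence $f(a,\bar a)=f(\mathbbm 1,\, aa^{*\,T})$ is a polynomial $q$ in the entries of $aa^*$ on that set. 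By density this holds everywhere. The key point is that the substitution genuinely produces a polynomial with no inverse of $\det$, because $q$ is obtained by evaluating $f$ at $x=\mathbbm 1$; this is the divisibility subtlety handled cleanly by the independence of $x$ and $\xi$.

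Finally, we check minimality. The $n^2$ real polynomials $\Real (aa^*)_{ij}$ for $i\le j$ and $\operatorname{Im}(aa^*)_{ij}$ for $i<j$ are homogeneous quadratic. None of them is a polynomial in the others: the only invariants of lower degree are constants, so any algebraic relation expressing one generator through the others would be a linear relation. But these forms are linearly independent, since $\pi_n$ maps onto $\poshermitian_n$, which has nonempty interior in the $n^2$-dimensional real space $\mathsf H_n$. So any Hilbert basis needs at least $n^2$ elements: a graded generating set must span the degree-two invariants, which form an $n^2$-dimensional space. This yields minimality, and the Hilbert map image is $\pi_n(\Mat_n)=\poshermitian_n$.

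The main obstacle is step two, namely justifying the passage from $\U(n)$-invariance in the real variables to $\Gl(n,\C)$-invariance in the independent complex variables $(x,\xi)$. The first thing to try is the Zariski density of $\U(n)$ in $\Gl(n,\C)$: the polynomial $f(xg,g^{-1}\xi)-f(x,\xi)$ vanishes for $g$ unitary, where $\bar g=(g^{-1})^{T}$. It is holomorphic in $g$, hence it vanishes identically.
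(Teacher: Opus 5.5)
Your proof is correct, and it takes a genuinely different route from the paper.

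\textbf{How the paper argues.} The paper stays with real polynomials and the compact group throughout. It inducts on $n$ and applies Kac's lemma to the slice $Y$ of matrices with last row $(0,\dots,0,v_{n+1})$, $v_{n+1}\neq 0$, with subgroup $H\cong\U(n)\times\U(1)$. Unitary right multiplication can only bring $a$ to such a (block-)triangular slice, never to $\mathbbm 1$. So the $H$-invariants on $Y$ are expressible through $\pi_{n+1}$ only rationally, e.g.\ $z_{i(n+1)}\overline{z}_{j(n+1)}=\pi_{n+1}^{i(n+1)}\pi_{n+1}^{(n+1)j}/\pi_{n+1}^{(n+1)(n+1)}$. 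This gives $f=p(\pi_{n+1})/(\pi_{n+1}^{(n+1)(n+1)})^r$ on $Y\U(n+1)\supset\Gl(n+1,\C)$. The lower-triangular slice $Y'$ gives a second expression with denominator $(\pi_{n+1}^{11})^s$. Since $\poshermitian_{n+1}$ has nonempty interior in $\mathsf H_{n+1}$, the identity $x_{11}^s\,p=x_{(n+1)(n+1)}^r\,q$ holds as a polynomial identity, and unique factorization then clears the denominators. This is the Weyl-type divisibility step.

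\textbf{How you argue.} You decouple $x=a$ from $\Xi=a^*$ and use the unitarian trick to upgrade $\U(n)$-invariance to invariance under $(x,\Xi)\mapsto(xg,g^{-1}\Xi)$ for all $g\in\Gl(n,\C)$. After that you can normalize $x$ alone. This is exactly Kac's lemma with $Y=\{\mathbbm 1\}\times\Mat_n$ and trivial $H$, and $F(x,\Xi)=F(\mathbbm 1,x\Xi)$ comes out polynomial with no induction and no denominators.

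\textbf{What each route buys.} Your route is shorter. It also shows the theorem is the classical first fundamental theorem for $\Gl(n,\C)$ acting on $n$ vectors and $n$ covectors, transported by the unitarian trick. The paper's route avoids complexification entirely, at the price of the induction and the divisibility argument.

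One small point in your step from $\U(n)$ to $\Gl(n,\C)$ deserves to be stated. A holomorphic function on $\Gl(n,\C)$ vanishing on $\U(n)$ vanishes identically because $\mathfrak{gl}(n,\C)=\mathfrak u(n)\oplus\imi\,\mathfrak u(n)$. Thus $\U(n)$ is maximally totally real in the connected group $\Gl(n,\C)$, and the identity theorem applies.

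\textbf{Minimality.} As phrased, your degree-two count only covers homogeneous generating sets. Let $R=\C[\Mat_n]^{\U(n)}$ and let $R_+$ be the ideal of invariants vanishing at $0$. For an arbitrary generating set $q_1,\dots,q_m$ of $R$, the elements $q_i-q_i(0)$ span $R_+/R_+^2$. Since $R_1=0$, the ideal $R_+^2$ lives in degrees $\geq 4$, so
$m\geq\dim_\C R_+/R_+^2\geq\dim_\C R_2=n^2$.
This bounds the number of complex-valued generators directly. The paper instead argues by comparing the $n^2$ real components of $\pi_n$ with the dimension of the open stratum of the orbit space.
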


Before we can prove the theorem we need the following crucial tool.

\begin{Lemma}[{\cite[Lemma 1-1.7.]{KacInvariantTheory}}] \label{thm:kac}
Let $G$ be a group acting (from the left or the right) on a real  or complex algebraic variety $X$ and let $p_1,\ldots,p_k$ be invariant polynomials. Suppose $H\subset G$ is a subgroup which leaves a subvariety $Y$ of  $X$ invariant such that
\begin{enumerate}[\rm (1)]
\item\label{ite1}
  The functions $p_i|_Y$ generate the ring of all $H$-invariant polynomials on $Y$.  
\item\label{ite2}
  The set $GY$ is a dense subset of $X$ in the euclidean topology. 
\end{enumerate}
Then   the polynomials $p_1,\ldots , p_k$ generate the ring of $G$-invariant polynomials on $X$.
%
\end{Lemma}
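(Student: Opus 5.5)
The argument is a direct restriction-and-density argument, and I would organize it as follows. Let $f$ be an arbitrary $G$-invariant polynomial function on $X$; the goal is to write $f = P(p_1,\ldots,p_k)$ for some polynomial $P$ in $k$ variables (with real or complex coefficients, as appropriate). The inclusion $\C[p_1,\ldots,p_k] \subset \C[X]^G$ is trivial because each $p_i$ is $G$-invariant by hypothesis, so only this reverse inclusion needs proof.

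First step: restrict $f$ to $Y$. Since $H \subset G$ and $Y$ is $H$-stable, the restriction $f|_Y$ is a polynomial function on the subvariety $Y$ which is invariant under $H$: for $y\in Y$ and $h \in H$ one has $hy \in Y$ and $f(hy)=f(y)$ (with the obvious modification for right actions). By hypothesis (\ref{ite1}) there is therefore a polynomial $P$ with
\[
  f|_Y = P\bigl(p_1|_Y,\ldots,p_k|_Y\bigr) \, .
\]

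Second step: compare on all of $X$. Put $g := f - P(p_1,\ldots,p_k)$. It is a polynomial function on $X$, $G$-invariant since $f$ and all $p_i$ are, and $g|_Y = 0$ by the first step. By $G$-invariance, $g(gy) = g(y) = 0$ for all $g\in G$, $y \in Y$ (or $g(yg)=g(y)$ in the right-action case), so $g$ vanishes on the set $GY$. A polynomial function is continuous in the euclidean topology, and $GY$ is euclidean-dense in $X$ by hypothesis (\ref{ite2}); hence $g \equiv 0$ on $X$, i.e.\ $f = P(p_1,\ldots,p_k)$. This gives $\C[X]^G = \C[p_1,\ldots,p_k]$.

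There is no real obstacle here; the only point requiring care is that density is used in the euclidean rather than the Zariski topology, which is exactly what makes the lemma usable over $\RR$ (where Nullstellensatz-type arguments fail, as stressed in the introduction). Since vanishing on a euclidean-dense set forces a continuous function to vanish, no algebraic closedness is needed. The substantive work is postponed to the application in Theorem~\ref{thm:HilbertMap}, where one must choose $Y$ (presumably the diagonal, or suitably block-shaped, matrices) and $H$ (a torus or a Weyl-type group) so that (\ref{ite1}) becomes a classical symmetric-function statement and (\ref{ite2}) follows from polar/singular value decomposition.
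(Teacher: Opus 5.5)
Your proof is correct and follows the same argument as the paper. You restrict a $G$-invariant $f$ to $Y$, use (1) to write $f|_Y = q(p_1|_Y,\ldots,p_k|_Y)$, carry this identity over to $GY$ by $G$-invariance, and conclude on all of $X$ by continuity and the euclidean density of $GY$. Only your closing guess about the application is off, and it does not affect the lemma: in Theorem~\ref{thm:HilbertMap} the paper takes $Y$ to be the matrices whose last row is $(0,\ldots,0,v_{n+1})$ with $v_{n+1}\neq 0$ and $H\cong \U(n)\times\U(1)$, and uses the lemma's argument only to represent $f$ with a power of $\pi_{n+1}^{(n+1)(n+1)}$ in the denominator, which a divisibility argument then removes.
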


\bpf
  Let $p$ be a polynomial on $X$ invariant under the $G$-action. Then the restricted polynomial $p|_Y$ 
  is invariant under the $H$-action. By assumption (\ref{ite1}) there exists a polynomial $q$ in $k$ variables such that 
  \[
      f|_Y = q \big (p_1|_Y, \ldots , p_k|_Y \big) \  .
  \]
  By $G$-invariance of the left and the right side one concludes that
  \[
   f|_{GY} = q \big (p_1|_{GY}, \ldots , p_k|_{GY} \big)  \  .
  \]
  The density assumption (\ref{ite2}) then implies
  $ f = q (p_1,\ldots , p_k)$,
  which proves the claim.
\epf

\begin{proof}[Proof of Theorem \ref{thm:HilbertMap}]
  We prove the claim by induction on $n$. For $n=1$, the algebra of $\U(1)$-invariant
  polynomials on $\C \cong \R^2$ is generated by the map
  $\pi_1 : \C \to \R $, $z \mapsto |z|^2$. See e.g.\ \cite[Prop.~3.1]{CraigWattsEtAl} for a proof. 
  Obviously, the Hilbert basis of invariant polynomials is minimal in this case.

  Now assume the claim holds for some natural $n >0$. Consider the variety $Y\subset \Mat_{n+1}$ consisting
  of all matrices $a \in \Mat_{n+1}$ of the form
  \[
    a = \begin{pmatrix} \widetilde{a} & v \\ 0 & v_{n+1} \end{pmatrix} \ , 
  \]
  where $\widetilde{a} \in \Mat_{n}$, $v = \begin{pmatrix} v_1 \\ \vdots \\ v_{n} \end{pmatrix} \in \CC^{n} $, and $v_{n+1} \in \C^*$.
  Furthermore, let $H \subset \U(n+1)$ be the group of all unitary matrices  $u$ of the form
  \[
     u = \begin{pmatrix} \widetilde{u} & 0 \\ 0 & z \end{pmatrix} \ , 
  \]
  where $\widetilde{u} \in U(n)$ and $z\in \U(1)$.
  Let 
  $ 
      f \in \C\otimes \R [\Mat_{n+1}]\cong \C
      [ (z_{ij} )_{{\mbox{}}_{1\leq i,j \leq n+1}}, \,
      (\overline{z}_{ij})_{{\mbox{}}_{1\leq i,j \leq n+1}} ]  
  $
  be a polynomial invariant under the right $\U(n+1)$-action on $\Mat_{n+1}$ and homogeneous of degree $d$ . Then $f$ can be written in the form
  \begin{displaymath}
  \begin{split}
      f & = \\ 
      & \sum_{ \substack{k_0,k_1,\overline{k}_1,\ldots , k_{n+1},\overline{k}_{n+1}\in \NN \\
      k_0 + k_1 + \overline{k}_1 + \dots + k_{n+1} + \overline{k}_{n+1}=d 
      }} \hspace{-10mm}
      f_{k_0} \big( (z_{ij} )_{{\mbox{}}_{\substack{\scriptscriptstyle 1\leq i \leq n+1 \\ \scriptscriptstyle 1\leq j \leq n}}} , \,
      (\overline{z}_{ij})_{{\mbox{}}_{\substack{\scriptscriptstyle 1\leq i \leq n+1 \\ \scriptscriptstyle 1\leq j \leq n}}} \big) 
      z_{1(n+1)}^{k_1}  \, \overline{z}_{1(n+1)}^{\overline{k}_1}   \cdot \ldots \cdot  z_{(n+1)(n+1)}^{k_{n+1}}  \, \overline{z}_{(n+1)(n+1)}^{\overline{k}_{n+1}}    \  ,
 \end{split}
 \end{displaymath}
 where $f_{k_0} $ is a polynomial homogeneous of degree $k_0$ in the variables $z_{ij}$ and $\overline{z}_{ij}$ with $1\leq i \leq n+1$
 and $1\leq j \leq n$. 
 After restriction of $f$ to the subspace $Y \subset \Mat_{n+1}$, the resulting polynomial is invariant under the right action of $H$ and can be written as 
 \begin{displaymath}
  \begin{split}
      f |_Y& = \\ 
      & \sum_{ \substack{k_0,k_1,\overline{k}_1,\ldots , k_{n+1},\overline{k}_{n+1}\in \NN \\
      k_0 + k_1 + \overline{k}_1 + \dots + k_{n+1} + \overline{k}_{n+1}=d }} \hspace{-10mm}
      \widetilde{f}_{k_0} \big( (z_{ij} )_{{\mbox{}}_{1\leq i,j \leq  n}} , \,
      (\overline{z}_{ij})_{{\mbox{}}_{ 1\leq i,j \leq n }} \big) 
      z_{1(n+1)}^{k_1}  \, \overline{z}_{1(n+1)}^{\overline{k}_1}   \cdot \ldots \cdot  z_{(n+1)(n+1)}^{k_{n+1}}  \, \overline{z}_{(n+1)(n+1)}^{\overline{k}_{n+1}}    \  ,
 \end{split}
 \end{displaymath}
 where $\widetilde{f}_{k_0}$ is the polynomial obtained from $f_{k_0}$ by setting 
 \[ z_{(n+1)1} = \ldots =  z_{(n+1)n} = 0 \text{ and } \overline{z}_{(n+1)1} = \ldots =  \overline{z}_{(n+1)n} = 0  \ . \]
 Note that for each $k_0 \leq d$ the polynomial $\widetilde{f}_0$ depends on the variables $z_{ij}$ and $\overline{z}_{ij}$ with $1\leq i,j \leq n$ only,
 is homogeneous   of degree $k_0$ and uniquely determined by $f|_Y$. Moreover, the monomial  
 $ z_{1(n+1)}^{k_1}  \, \overline{z}_{1(n+1)}^{k_1}   \cdot \ldots \cdot  z_{(n+1)(n+1)}^{k_{n+1}}  \, \overline{z}_{(n+1)(n+1)}^{k_{n+1}}   $ depends only on  the last   
 column of a matrix in $Y$ and is homogeneous of degree $d-k_0= k_1+\overline{k}_1+ \ldots + k_{n+1}+\overline{k}_{n+1}$. By invariance of $f|_Y$ under the right $H$-action on $Y$ and since $H \cong \U(n) \times \U(1)$  one concludes by homogeneity that $\widetilde{f}_{k_0}$ is $\U(n)$-invariant and that
 the monomial 
 \[
   m_{k_1,\overline{k}_1,\ldots, k_{n+1},\overline{k}_{n+1}} = 
   z_{1(n+1)}^{k_1}  \, \overline{z}_{1(n+1)}^{\overline{k}_1} \cdot \ldots \cdot  z_{(n+1)(n+1)}^{k_{n+1}}  \, 
   \overline{z}_{(n+1)(n+1)}^{\overline{k}_{n+1}}   
 \]
 is $\U(1)$-invariant whenever $\widetilde{f}_{k_0} \neq 0$. 
 By the inductive hypothesis, the polynomial $\widetilde{f}_0$ can now be represented in the form 
 \[
    \widetilde{f}_0 = \widetilde{F}_0 \big( (\pi_n^{ij})_{1\leq i,j \leq n}  \big) = 
    \widetilde{F}_0 \big( ( \langle z_n^i,z_n^j \rangle)_{1\leq i,j \leq n}  \big) \ ,  
 \]  
 where  $ \widetilde{F}_0$ is a polynomial in $n^2$ variables and 
 \[
   \pi_n^{ij} ( z )  = \langle z_n^i, z_n^j \rangle = \sum_{l=1}^n z_{il}\overline{z}_{jl}  \quad 
   \text{for } z = (z_{ij})_{1\leq i,j \leq n} \in \Mat_n \ .
 \] 
 Note that here we have used the symbol $z_n^i$ to denote the row vector $(z_{i1}, \ldots , z_{in})$.
 Since the monomial $m_{k_1,\overline{k}_1,\ldots, k_{n+1},\overline{k}_{n+1}}$
 is $\U(1)$-invariant, the equality
 \begin{equation}
 \label{eq:condition-invariance-monomials}
   \sum_{i=1}^{n+1} k_i  = \sum_{i=1}^{n+1} \overline{k}_i
  \end{equation}
 holds by \cite[Prop.\ 3.1]{CraigWattsEtAl}.  Hence, the monomial 
 $ m_{k_1,\overline{k}_1,\ldots, k_{n+1},\overline{k}_{n+1}} $ has a representation of the  form
 \[
   m_{k_1,\overline{k}_1,\ldots, k_{n+1},\overline{k}_{n+1}} = 
   |z_{1(n+1)}|^{2l_{11}} \cdot \ldots \cdot |z_{(n+1)(n+1)}|^{2l_{(n+1)(n+1)}} \cdot 
   \prod_{\substack{1 \leq i,j\leq n+1 \\ i\neq j}} (z_i \overline{z}_j)^{l_{ij}} \ ,
 \] 
 where the $l_{ij}$ are natural numbers and $\sum_{1\leq i,j\leq n+1} l_{ij} = \sum_{1\leq i \leq n+1} k_i$. Note that the 
 $l_{ij}$ are not uniquely determined, but that one can choose $l_{ii}= \min \{ k_i, \overline{k}_i\}$.
 In the next step we express the polynomials $\pi^{ij}_n$ and the monomials $|z_{(1(n+1)}|^2, \ldots , |z_{(n+1)(n+1)}|^2$
 over the subspace $Y \subset \Mat_{n+1}$ in terms of the polynomials 
 \[
   \pi_{n+1}^{ij} (z) = \langle z_{n+1}^i , z_{n+1}^j \rangle = \sum_{l=1}^{n+1} z_{il} \overline{z}_{jl} \ ,
 \]
 where  $1 \leq i,j \leq n+1$, $ z =(z_{ij})_{1\leq i,j\leq n+1} \in \Mat_{n+1}$, and $z_{n+1}^i = (z_{i1} , \ldots , z_{i(n+1)})$.   
 For $ z =(z_{ij})_{1\leq i,j\leq n+1} \in Y$ one obtains
 \[
   |z_{(n+1)(n+1)}|^2  = z_{(n+1)(n+1)}\overline{z}_{(n+1)(n+1)} = \langle z_{n+1}^{n+1}, z_{n+1}^{n+1} \rangle =
   \pi_{n+1}^{(n+1)(n+1)} (z) 
 \] 
 and, under the condition that  $1 \leq i,j \leq n$ and $i\neq j$,
 \begin{displaymath}
 \begin{split}
    |z_{i(n+1)}|^2 & = \langle z_{n+1}^i, z_{n+1}^{n+1} \rangle =  \pi_{n+1}^{i(n+1)} (z) \ , \\
    z_{i(n+1)} \overline{z}_{j(n+1)} & = 
    \frac{1}{\langle z_{n+1}^{n+1}, z_{n+1}^{n+1} \rangle} \cdot \langle z_{n+1}^i, z_{n+1}^{n+1} \rangle \cdot 
    \langle z_{n+1}^{n+1}, z_{n+1}^j \rangle  =    \\
    & =    \frac{1}{\pi_{n+1}^{(n+1)(n+1)} (z)} \cdot \pi_{n+1}^{i(n+1)} (z)  \cdot \pi_{n+1}^{(n+1)j} (z) \  .
 \end{split}
 \end{displaymath}
 In addition, the following equality holds for $1\leq i,j \leq n$, again under the assumption $ z \in Y$:
 \begin{displaymath}
   \pi_n^{ij} (z) =  \pi_{n+1}^{ij} (z)  - z_{i(n+1)} \overline{z}_{j(n+1)} = \pi_{n+1}^{ij} (z)  - 
   \frac{1}{\pi_{n+1}^{(n+1)(n+1)} (z)} \cdot \pi_{n+1}^{i(n+1)} (z)  \cdot \pi_{n+1}^{(n+1)j} (z) \  .
 \end{displaymath}
 Hence the $H$-invariant polynomials $\pi_n^{ij}$ and the $H$-invariant monomials  
 $m_{k_1,\overline{k}_1,\ldots, k_{n+1},\overline{k}_{n+1}}$ can be represented over 
 $Y$ as polynomials in the functions $\pi_{n+1}^{i(n+1)}\big|_Y$ and $\overline{\pi}_{n+1}^{i(n+1)}\big|_Y$ 
 with $1\leq i \leq n$,  $\pi_{n+1}^{(n+1)(n+1)}\big|_Y$ and $\frac{1}{\pi_{n+1}^{(n+1)(n+1)} \big|_Y}$.  
 By the proof of Lemma \ref{thm:kac} one concludes that over the subspace $Y\U(n+1) \subset \Mat_{n+1}$, the 
 $\U(n+1)$-invariant polynomial $f$ can be expressed as
  \begin{equation}\label{eq:invariant-polynomial-representation-upper-triangular}
   f =  \frac{1}{\big(\pi_{n+1}^{(n+1)(n+1)}\big)^r} \, 
   p \big( (\pi_{n+1}^{ij})_{1\leq i,j\leq n+1} )\big)  \ ,
  \end{equation}
  where $r\in \NN$ and $p$ is a polynomial in $(n+1)^2$ real variables.
  
  In an analogous construction, we let $Y^\prime \subset \Mat_{n+1}$ be the subspace of all matrices
  $a \in \Mat_{n+1}$ of the form 
  \[
     a = 
     \begin{pmatrix}
        v_1 & 0 \\ v & \widetilde{a}
     \end{pmatrix} \ ,
  \]
  where $v_1\in \C^*$, $v= \begin{pmatrix} v_2 \\ \vdots  \\ v_{n+1} \end{pmatrix}\in \CC^n$, and $\widetilde{a} \in \Mat_n$.
  The subgroup $H^\prime \cong \U(1) \times \U (n)\subset  \U (n+1)$ is defined as the group of all unitaries $u$ of the form
  \[
     u = \begin{pmatrix}
       z & 0\\
       0 & \widetilde{u}
     \end{pmatrix}
  \]
  where $z \in \U(1)$ and $\widetilde{u}\in \U(n)$. Then $Y^\prime$ is invariant under the right action of $H^\prime$ 
  and we can essentially repeat the argument above just with $Y$ replaced by $Y^\prime$ and $H$ by $H^\prime$.  
  Thus one shows that over the subset $Y^\prime \U(n+1)$ the polynomial $f$ also has a representation of the form 
   \begin{equation}\label{eq:invariant-polynomial-representation-lower-triangular}
   f =  \frac{1}{\big(\pi_{n+1}^{11}\big)^s} \, 
   q \big( (\pi_{n+1}^{ij})_{1\leq i,j\leq n+1} )\big)  \ ,
  \end{equation}
  where $s\in \NN$ and $q$ is a polynomial in $(n+1)^2$ real variables.  
  For later purposes note that the polynomial $p$ (and likewise $q$) naturally induces a polynomial function on the space 
  $\mathsf{H}_{n+1}$ of Hermitian $(n+1)\times(n+1)$ matrices, and that $p$ (respectively $q$) is
  uniquely determined by the restriction of the respective polynomial function to an open subset of  $\mathsf{H}_{n+1}$.
  
  Now recall that by the Iwasawa decomposition, every $g \in \Gl(n+1)$ has a unique decomposition 
  of the form 
  \[
     g = T u \  ,
  \]    
  where $T$ is an upper triangular matrix with positive entries on the diagonal and $u \in \U(n+1)$.
  Because the map $g \mapsto (g^*)^{-1}$ is an automorphism of $\Gl(n+1)$ switching upper triangular and lower triangular   
  matrices and leaving the group $\U(n+1)$ invariant, any $g\in \Gl(n+1)$ can also be represented in the form
  \[
    g = Lu^\prime \  ,
  \] 
  where $L$ is a lower triangular matrix with positive diagonal entries and $u^\prime\in \U (n+1)$.
  These observations entail that the spaces $Y \U(n+1)$ and $Y^\prime \U(n+1)$ both contain the 
  group $\Gl (n+1)$ which is dense in $\Mat_{n+1}$. One concludes by  Equations 
  \eqref{eq:invariant-polynomial-representation-upper-triangular}  and 
  \eqref{eq:invariant-polynomial-representation-lower-triangular}
 that the equality
 \begin{equation}\label{eq:equality-polynomials}
    (\pi_{n+1}^{11} )^s \, p \big( (\pi_{n+1}^{ij})_{1\leq i,j\leq n+1} )\big)
    = (\pi_{n+1}^{(n+1)(n+1)} )^r   \, q \big( (\pi_{n+1}^{ij})_{1\leq i,j\leq n+1} )\big) 
 \end{equation}
 holds on $\Mat_{n+1}$. Since the image $\poshermitian_{n+1}$ of the map $\pi_{n+1} : \Mat_{n+1} \to \mathsf{H}_{n+1}$ contains the
 positive definite Hermitian matrices or in other words the invertible positive semi-definite Hermitian matrices, 
 the image $\poshermitian_{n+1}$ has open interior, and Equality \eqref{eq:equality-polynomials} holds on $\mathsf{H}_{n+1}$ that is
 \[
 \begin{split}
    x_{11}^s \, & p \big( (x_{ii})_{1\leq i \leq n+1} ,  (z_{ij})_{1 \leq i < j \leq n+1},  (\overline{z}_{ij})_{1\leq i < j \leq n+1} \big) 
    = \\ 
    & = x_{(n+1)(n+1)}^r \, 
    q \big( (x_{ii})_{1\leq i \leq n+1} ,  (z_{ij})_{1 \leq i < j \leq n+1},  (\overline{z_{ij}})_{1\leq i < j \leq n+1} \big)     \  ,
 \end{split}
 \]  
 where
 \[ 
   z_{ij} = x_{ij} + \imi y_{ij}\quad\text{and} \quad  
   \overline{z}_{ij} = x_{ij} - \imi y_{ij} \quad \text{for } 1\leq i < j \leq n+1 \ ,
 \]
  and where the variables
  \[ 
    x_{11}, \ldots , x_{(n+1)(n+1)}\ , \: (x_{ij})_{1\leq i<j\leq n+1} \text{ and }
   (y_{ij})_{1 \leq i < j \leq n+1}
  \]
  form the real coordinates of the space $\mathsf{H}_{n+1}$ of Hermitian matrices. Hence the polynomial $p$ is divisible 
  by $x_{(n+1)(n+1)}^r$ and the polynomial $q$ by  $x_{11}^s$. Let 
  \[
     P = \frac{p}{x_{(n+1)(n+1)}^r}  = \frac{q}{x_{11}^s}  
  \]   
  be the corresponding quotient polynomial. By construction,
  \[
    f = P \big( (\pi_{n+1}^{ij})_{1\leq i,j \leq n+1}\big) \ , 
  \] 
  and $\pi_{n+1} : \Mat_n \to \mathsf{H}_{n+1} \subset \Mat_{n+1}$ is a Hilbert map. Minimality is obvious because
  $\pi_{n+1}$ has $(n+1)^2$ real components and the real dimension of the open stratum of the orbit space 
  $\poshermitian_{n+1} \cong X_{n+1}$ is $(n+1)^2$ as well.  This proves the claim for $n+1$.
  \end{proof} 
 
 \begin{Remark}
 \begin{enumerate}[1.]
 \item   
  The idea to use a divisibility argument in the proof, more precisely the observation that the polynomial $p$ is 
  divisible by  $x_{(n+1)(n+1)}^r$ and the polynomial $q$ by  $x_{11}^s$, goes back to Weyl's book 
   \cite{Weyl.TheClassicalGroups} on classical groups, where an analogous divisibility argument was used in 
   Section II.2 to construct a Hilbert basis of polynomials  invariant under the left $\mathsf{O}(2)$-action on $\RR^2$.  
   Note that such an argument is non-trivial since here one has to work with real algebraic varieties and groups 
   where Hilbert's Nullstellensatz does not necessarily hold.
 \item   
   Let us also remark that to our knowledge the above construction of a Hilbert basis of invariant polynomials for the 
   right (or left) action of $\U(n)$ on $\Mat_n$ is new and not contained in the literature on invariant theory where usually 
   only complex algebraic group actions or the action of $\U(n)$ by conjugation are considered.  
  \end{enumerate}
 \end{Remark}

\begin{Corollary}
\label{Corollary1}
  The projection
  \[ \pi_n: \Mat_n \to \poshermitian_n , \quad a  \mapsto a a^* \]
   coincides with the orbit map of the right $\U(n)$-action on $\Mat_n$.  The stratifications of $\Mat_n$ and  $\poshermitian_n$ by orbit type  coincide with the respective 
   stratifications by rank and are both {\rm (}b{\rm )}-regular in the sense of Whitney.   Finally, $\pi_n$ becomes a proper surjective stratified submersion when $\Mat_n$ and 
   $\poshermitian_n$ are stratified by  rank. 
 \end{Corollary}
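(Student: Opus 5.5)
The corollary is essentially an assembly of Theorem \ref{thm:HilbertMap}, Propositions \ref{S-Whitney1}, \ref{S-OrbitMap} and \ref{S-HSD-Stratification}, and the general theory of orbit spaces of compact group actions recalled before Theorem \ref{thm:HilbertMap}. The plan is to verify the claims in their natural order.

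\textbf{Orbit map.} By Theorem \ref{thm:HilbertMap}, $\pi_n$ is a (minimal) Hilbert map for the right $\U(n)$-action. A Hilbert map of a compact group action separates orbits and induces a homeomorphism of $\Mat_n/\U(n)$ onto its image. This image is $\poshermitian_n$, which Proposition \ref{S-OrbitMap} already identifies with $X_n$ compatibly with $\sigma_n$. So $\pi_n$ is the orbit map, and $\varphi_n$ is the homeomorphism induced by the Hilbert map.

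\textbf{Strata and regularity.} Section \ref{sec:FiniteDimensionalMatrixAlgebras} shows that on $\Mat_n$ the rank-$k$ set equals the orbit type stratum $\Mat_{(\U(n-k))}$. Proposition \ref{S-Whitney1} gives its (b)-regularity. On the orbit space, the orbit type strata are by definition the images of the upstairs orbit type strata, i.e.\ $\pi_n(\Mat_n(k))=\poshermitian_n(k)$, since $aa^*$ has the same rank as $a$. Hence the orbit type stratification of $\poshermitian_n$ is the rank stratification. For (b)-regularity I invoke the cited fact (Bierstone; \cite[Section 4.4]{PflaumBook}; Example \ref{ex:hilbert.map-whitney-stratification}): the image of a Hilbert map, stratified by the images of the orbit type strata, is a Whitney (b)-regular semialgebraic stratification. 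This is the main point where the embedding matters. Regularity is a statement about $\poshermitian_n$ as a subset of the ambient space $\mathsf{H}_n\subset\Mat_n$, and it holds precisely because $\pi_n$ is a genuine polynomial Hilbert map. That is why Theorem \ref{thm:HilbertMap} is needed, not merely Proposition \ref{S-OrbitMap}. One should also check that the image strata $\poshermitian_n(k)$ are smooth submanifolds of $\mathsf{H}_n$ with the smooth structure used in Proposition \ref{S-HSD-Stratification}. This holds because $\varphi_{n,k}$ is a diffeomorphism onto $\poshermitian_n(k)$, which is the standard manifold of rank-$k$ positive semidefinite matrices.

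\textbf{Stratified submersion.} Surjectivity follows from polar decomposition: $\rho=\pi_n(\sqrt\rho)$. Properness follows from $\|a\|_{\mathrm{HS}}^2=\tr(\pi_n(a))$, so preimages of bounded sets are bounded, and they are closed by continuity. Alternatively, orbit maps of compact groups are proper. $\pi_n$ maps the stratum $\Mat_n(k)$ into $\poshermitian_n(k)$, and its restriction $\pi_{n,k}$ is the projection of a locally trivial fiber bundle by Proposition \ref{S-HSD-Stratification}. It is therefore a smooth submersion on each stratum, which gives the stratified submersion property.

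The step I expect to carry the most content is the transfer of (b)-regularity to $\poshermitian_n\subset\mathsf{H}_n$. It relies on the Hilbert-map theorem together with the general result on images of orbit type strata. Everything else is bookkeeping with earlier propositions.
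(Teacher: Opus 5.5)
Your proposal is correct and follows essentially the same route as the paper: Theorem~\ref{thm:HilbertMap} makes $\pi_n$ a Hilbert map, the images of the orbit type strata $\Mat_n(k)=\Mat_{(\U(n-k))}$ are the rank strata $\poshermitian_n(k)$, and (b)-regularity of the orbit space is taken from the general theorem on orbit spaces embedded via a Hilbert map (the paper cites Pflaum's Thm.~4.4.6). The paper leaves surjectivity, properness and the stratumwise submersion property implicit, so your explicit checks via $\sqrt{\rho}$, $\tr(aa^*)=\|a\|_{\mathrm{HS}}^2$ and the bundle structure of $\pi_{n,k}$ add detail without changing the argument.
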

\bpf

One concludes by Theorem \ref{thm:HilbertMap} that the polynomials $\pi_n^{ij}$ form a Hilbert basis  of the invariant polynomials on $\Mat_n$. 
In particular this entails the already made observation that  the image $\poshermitian_n$ of $\pi_n$ coincides with the orbit space of the $\U(n)$-action. Moreover, the images $\poshermitian_n (k)$ 
of the strata $\Mat_n(k) = (\Mat_n)_{(U(n-k))} $ of $\Mat_n$ under the projection $\pi_n$ are manifolds, and the induced decomposition of the  orbit space
$\poshermitian_n$ is a {\rm (}b{\rm )}-regular stratification according to \cite[Thm.~4.4.6]{PflaumBook}. 
\epf

\begin{Corollary}
\label{Corollary2}
 The space of $n\times n$ density matrices $\densityops_n \subset \Mat_n$ is naturally stratified by 
 rank. This  stratification is {\rm (}b{\rm )}-regular.
\end{Corollary}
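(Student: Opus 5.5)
We derive Corollary~\ref{Corollary2} from Corollary~\ref{Corollary1} by a transversality argument. The density matrices form the slice $\densityops_n = \poshermitian_n \cap V$, where $V = \{\rho \in \mathsf{H}_n \mid \tr \rho = 1\}$ is an affine hyperplane in the real vector space $\mathsf{H}_n$. On $\poshermitian_n$ we have $\tr(\rho) = \tr(a a^*)$ for $\rho = aa^*$, so this is the same normalization as in the paper. We define the strata as $\densityops_n(k) := \poshermitian_n(k) \cap V$ for $1 \leq k \leq n$; the zero matrix is excluded since it is not in $V$. The goal is to show that each $\densityops_n(k)$ is a submanifold, that the condition of frontier holds, and that Whitney (b) is inherited.

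\emph{Step 1: transversality.} We show that $\poshermitian_n(k)$ meets $V$ transversally. Each stratum $\poshermitian_n(k)$ with $k \geq 1$ is invariant under the scaling $\rho \mapsto t\rho$, $t>0$. Hence for $\rho \in \poshermitian_n(k)$ the vector $\rho$ itself lies in $T_\rho \poshermitian_n(k)$. Since $\d\tr(\rho) = \tr \rho = 1 \neq 0$, the trace restricted to each stratum is a submersion. Therefore $\densityops_n(k)$ is a submanifold of codimension one in $\poshermitian_n(k)$.

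\emph{Step 2: product structure.} We make the transversality global. The map
\[
\Psi: \densityops_n \times (0,\infty) \to \poshermitian_n \setminus \{0\}, \quad (\rho,t) \mapsto t\rho,
\]
is a diffeomorphism of the ambient open sets. Concretely, $\Psi$ is the restriction of $V \times (0,\infty) \to \{\tr > 0\}$, which is a diffeomorphism with inverse $\sigma \mapsto (\sigma/\tr\sigma, \tr \sigma)$. It maps $\densityops_n(k) \times (0,\infty)$ onto $\poshermitian_n(k)$.

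\emph{Step 3: frontier condition.} The frontier condition for $\densityops_n$ follows from \eqref{eq:density_matrix_rank_condition_frontiers} by applying $\Psi^{-1}$. The closure of $\poshermitian_n(k)$ in $\{\tr > 0\}$ corresponds under $\Psi$ to the closure of $\densityops_n(k)$ times $(0,\infty)$.

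\emph{Step 4: Whitney (b).} Whitney's condition (b) is invariant under $C^1$ diffeomorphisms of the ambient space. It also holds for a stratification $\mathcal S$ of $A$ if and only if it holds for the product stratification $\mathcal S \times (0,\infty)$ of $A \times (0,\infty)$.

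\emph{Step 5: conclusion.} By Corollary~\ref{Corollary1}, the rank stratification of $\poshermitian_n \setminus \{0\}$ is (b)-regular, being an open part of a (b)-regular stratification. Transporting this along $\Psi^{-1}$ shows that $\{\densityops_n(k) \times (0,\infty)\}_k$ is (b)-regular, hence $\{\densityops_n(k)\}_k$ is (b)-regular.

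The main obstacle is the product claim in Step 4, specifically the implication from $A \times (0,\infty)$ back to $A$. We handle it directly from the definition via limits of secant lines and tangent planes. Take sequences $x_i \to y$ in a stratum $R$ and $y_i \to y$ in a stratum $S$ of $A$, with secants $\overline{x_i y_i} \to \ell$ and $T_{x_i}R \to \tau$. Lift them to $(x_i,1)$ and $(y_i,1)$. The secant lines are then $\overline{x_i y_i} \times \{0\}$, and the tangent planes are $T_{x_i}R \times \R$. Condition (b) for the product gives $\ell \times \{0\} \subset \tau \times \R$, and hence $\ell \subset \tau$.

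An alternative is to avoid Step 4 entirely by intersecting with the transversal hyperplane $V$ and citing the standard fact that transversal intersection with a submanifold preserves Whitney (b) regularity. Transversality to all strata is what Step 1 establishes.
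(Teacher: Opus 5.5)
Your proof is correct, but for the (b)-regularity it takes a different route from the paper. The paper's argument is two lines long. The hyperplane $V=\{\tr=1\}$ meets every rank stratum transversally, and (b)-regularity of the induced stratification of $\densityops_n$ then holds ``by construction''. Implicitly this uses the principle that slicing a (b)-regular stratified set by a submanifold transverse to all strata preserves (b), which is the alternative you mention at the end.

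You instead exploit the conical structure of $\poshermitian_n$. The scaling map $\Psi$ is an ambient diffeomorphism identifying $\poshermitian_n\setminus\{0\}$ with $\densityops_n\times(0,\infty)$ stratum by stratum. So the manifold property, the frontier condition and (b)-regularity all transfer at once. The only inputs are diffeomorphism invariance of (b) and your elementary slice-at-$t=1$ argument.

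Your route is more self-contained. The paper's ``by construction'' hides the fact that the sliced tangent spaces $T_{y_i}\poshermitian_n(k)\cap\ker\tr$ converge to $\tau\cap\ker\tr$. That requires the limit plane $\tau$ to stay transverse to $\ker\tr$, which holds here because $y\in\tau$ by scaling invariance, and in general by condition (a). Your product decomposition avoids this step. Your Step 1 also makes explicit the transversality the paper only asserts, and which it states for $\Mat_n(k)$ rather than $\poshermitian_n(k)$.

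The paper's route is more general: it applies to any transverse slice. Yours needs $V$ to be a cross-section of a cone whose strata are invariant under positive scaling.

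Two minor points:
\begin{itemize}
\item In your product argument you take tangent planes at $x_i\in R$, so $R$ is the upper stratum. This reverses the labeling in Definition~\ref{Def-WhitneyReg}; it is harmless but worth aligning.
\item The $C^1$-invariance of (b) deserves one line. Write $\Psi(y_i)-\Psi(x_i)=A_i(y_i-x_i)$ with $A_i$ an averaged derivative tending to $D\Psi(y)$. Since $\Psi$ is real-analytic, the more commonly quoted $C^2$-invariance would also suffice.
\end{itemize}
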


\bpf
 As observed before, $\densityops_n$ is the intersection of the semialgebraic   set $\poshermitian_n$ with the 
 linear subspace $V = \{ a \in \Mat_n \mid \tr (a) =1\} \subset \Mat_n $. The subspace $V$ has codimension $1$ and 
 intersects the  submanifold $\Mat_n(k)$ transversally, hence the stratification of  $\poshermitian_n$  by rank 
 induces a stratification of $\densityops_n$. It fulfills Whitney's condition (b) by construction and since the stratification 
 by rank on $\poshermitian_n$ is {\rm (}b{\rm )}-regular. 
 \epf

%


\section{Uhlmann's connection}
\label{UhlmannConn-finD}


Here, we recall Uhlmann's construction of a natural connection on the space of density matrices in the
finite-dimensional case. First, we define Uhlmann's connection on the principal stratum. 
For that purpose, denote the Riemannian metric on the group $\Gl(n, \C)$ induced from the natural scalar product 
\eqref{NatScPr-Mn} on $\Mat_n$ by $h$, that is, let 
\begin{equation}
\label{RiemMetr-GlnC}
h (X,Y) = \Real \tr(X^\ast Y) \quad \text{for all $X, Y \in \tg_g \Gl(n, \C) \cong \Mat_n$ with $g\in \Gl(n, \C)$} \ .
\end{equation}
Uhlmann's equation
\begin{equation}
\tag{\ref{UhlmannEqn}}
a^\ast \d a -(\d a^\ast) a = a^\ast a {\bf A} + {\bf A} a^\ast a 
\end{equation}
then implicitely defines a connection $1$-form ${\bf A}$ on the generic stratum, that is, on the principal bundle
\eqref{Fibr-M-D-princStr}. In the following we make this more precise and explain how Uhlmann's equation needs to be
interpreted.

\begin{Theorem}
\label{S-UhlmConn-PrincStr}
On the principal bundle $\pi_{n,n}: \Gl(n, \C) \to  \poshermitian_n(n) $ there exists a unique connection
$1$-form
\[ {\bf A} :  \Gl(n, \C)\to\tg^*\Gl(n, \C)\otimes\mathfrak{u} (n)\cong\Gl(n, \C) \times \big(\Mat_n^* \otimes \imi \hermitian_n\big)\]
which for every  $g\in \Gl(n, \C)$ and $X \in \tg_g\Gl(n, \C)$ solves the equation
\begin{equation}
\label{UhlmannEqn-princStr}
  g^\ast X - X^\ast g  =
  g^\ast g \big(X \, \lrcorner \, {\bf A} \big) + \big( X \, \lrcorner \, {\bf A} \big) g^\ast g \ .
\end{equation}
Moreover, the following properties are equivalent:
\begin{enumerate}[{\rm (1)}]
\item \label{ConnItem1}
  The tangent vector $X $ is horizontal with respect to the connection $1$-form $\bf A$, that is
  \[ {\bf A}( X) = 0 \ . \]
\item \label{ConnItem2}
 The tangent vector $X$ fulfills 
\beq
\label{UhlmConn-HorC}
g^\ast X - X^\ast g = 0 \, .
\eeq
\item \label{ConnItem3}
  With respect to the Riemannian metric $h$, the tangent vector $X$ is orthogonal to the vertical subspace
  \[ {\mr V}_g := \ker {\mr T}_g \pi_{n,n} \subset \tg_g \Gl(n, \C)\ .\]
\end{enumerate}
Finally, the distribution $ {\mr H} = \bigsqcup_{g \in \Gl(n,\C)}  {\mr H}_g$ with  
\begin{equation}
\label{HorDistr}
  {\mr H}_g = \left\{ X \in \tg_g \Gl(n,\C)\mid g^\ast X - X^\ast g = 0  \right\} \quad \text{for all } g\in \Gl(n,\C) 
\end{equation}
is a connection on the bundle $\pi_{n,n}$, that is, $ {\mr H}$ is a $\U(n)$-right invariant subbundle of
$\tg\Gl(n,\C)$ and the  equality
\begin{equation}
\label{Splitting}  
   \tg \Gl(n,\C) = {\mr V} \oplus {\mr H} 
\end{equation}
holds true, where $\mr V = \bigsqcup_{g\in \Gl(n,\C)} \mr V_g$ is the vertical subbundle of $\tg\Gl(n,\C)$.
The connection $1$-form corresponding to $\mr H$ coincides with the $1$-form $\bf A$. 
\end{Theorem}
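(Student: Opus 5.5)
The plan is to exploit that, for $g\in\Gl(n,\C)$, the matrix $g^\ast g$ is positive definite, so the Sylvester-type operator
\[
L_{g}:\Mat_n\to\Mat_n,\quad B\mapsto g^\ast g\,B+B\,g^\ast g
\]
is invertible. To see this, diagonalize $g^\ast g=w\,\mathrm{diag}(\mu_1,\ldots,\mu_n)w^\ast$ with all $\mu_i>0$. Then $L_g$ acts on $w^\ast B w$ entrywise by multiplication with $\mu_i+\mu_j>0$.

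Since $L_g$ commutes with $B\mapsto -B^\ast$, it preserves $\imi\hermitian_n=\mf u(n)$. The left-hand side $g^\ast X-X^\ast g$ of \eqref{UhlmannEqn-princStr} is anti-Hermitian. So I define
\[
X\lrcorner{\bf A}:=L_g^{-1}(g^\ast X-X^\ast g)\in\mf u(n).
\]
This gives existence and uniqueness of a solution of \eqref{UhlmannEqn-princStr}. Smoothness in $g$ follows because $L_g^{-1}$ depends smoothly (indeed rationally) on $g$, and the form is real-linear in $X$. The equivalence (1)$\Leftrightarrow$(2) is then immediate from injectivity of $L_g$.

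Next I verify that ${\bf A}$ is a connection form. I will do this directly rather than through $\mr H$, and afterwards identify the two.

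\emph{Vertical vectors.} The fundamental vector field of $\xi\in\mf u(n)$ is $X=g\xi$, since the action is $g\mapsto gu$. For it, $g^\ast g\xi-\xi^\ast g^\ast g=g^\ast g\xi+\xi g^\ast g=L_g(\xi)$, hence ${\bf A}(g\xi)=\xi$.

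\emph{Equivariance.} For $u\in\U(n)$ the right translation sends $X$ at $g$ to $Xu$ at $gu$. One has $(gu)^\ast Xu-(Xu)^\ast gu=u^\ast(g^\ast X-X^\ast g)u$ and $L_{gu}(u^\ast Bu)=u^\ast L_g(B)u$. Therefore ${\bf A}_{gu}(Xu)=\Ad_{u^{-1}}{\bf A}_g(X)$.

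These two properties make ${\bf A}$ a connection $1$-form. By general principal bundle theory, its kernel $\ker{\bf A}=\mr H$, the distribution in \eqref{HorDistr}, is then a right-invariant complement of $\mr V$ with connection form ${\bf A}$. This yields the splitting \eqref{Splitting} and the final assertion.

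It remains to show (2)$\Leftrightarrow$(3). First I identify $\mr V_g=g\,\mf u(n)$. Indeed, $\tg_g\pi_{n,n}(X)=Xg^\ast+gX^\ast$, and this vanishes iff $g^{-1}X$ is anti-Hermitian; equivalently, $\mr V_g$ is the tangent space to the fiber $g\U(n)$.

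Then $X\perp_h\mr V_g$ means $\Real\tr(X^\ast g\xi)=0$ for all $\xi\in\imi\hermitian_n$. Because the real-trace pairing between $\imi\hermitian_n$ and $\hermitian_n$ is nondegenerate, this says exactly that the anti-Hermitian part of $X^\ast g$ vanishes. In other words $X^\ast g$ is Hermitian, i.e.\ $X^\ast g=g^\ast X$, which is \eqref{UhlmConn-HorC}.

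As a dimension check, $\mr H_g=\{X: g^\ast X\in\hermitian_n\}=(g^\ast)^{-1}\hermitian_n$ has real dimension $n^2$. It therefore complements the $n^2$-dimensional $\mr V_g$, consistent with (3).

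The main care point is the bookkeeping of real versus complex linearity and of anti-Hermitian parts in these trace identities, together with the invertibility of $L_g$ on $\mf u(n)$. The latter is the only place where $g\in\Gl(n,\C)$, rather than merely $g\in\Mat_n$, is essential. It is also what explains why the form degenerates on the lower strata.
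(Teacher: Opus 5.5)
Your proof is correct, but it runs in the opposite direction from the paper's.

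\textbf{The paper's route.} It starts from the geometry. It identifies $\mr V_g=\{Y: gY^\ast+Yg^\ast=0\}$ and uses $\tr\big(A(X^\ast g-g^\ast X)\big)=2h(X,gA)$ to show that the $h$-orthogonal complement of $\mr V_g$ is exactly $\mr H_g$, so (2)$\Leftrightarrow$(3) comes first. Hence $\mr H$ is the principal connection induced by the right-invariant metric $h$. The paper then \emph{defines} ${\bf A}$ as the connection form $gA+X\mapsto A$ of $\mr H$. It checks Uhlmann's equation separately on vertical vectors, where it reduces to anti-Hermiticity of $A$, and on horizontal vectors, where both sides vanish. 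Uniqueness comes from injectivity of $B\mapsto g^\ast gB+Bg^\ast g$ on $\mf u(n)$, which the paper gets from transversality: if $A$ lies in the kernel, then $gA$ is both vertical and horizontal, hence zero.

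\textbf{Your route.} You prove invertibility of this Sylvester operator directly from positivity of the spectrum of $g^\ast g$. You define ${\bf A}$ by the explicit formula $L_g^{-1}(g^\ast X-X^\ast g)$ and verify the two connection-form axioms by hand. You then obtain $\mr H$ as $\ker{\bf A}$ and prove (2)$\Leftrightarrow$(3) independently at the end.

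\textbf{What each buys.} Your route gives an explicit formula, smoothness via rational dependence on $g$, and uniqueness without reference to any metric. It also isolates exactly where invertibility of $g$ enters. The paper's route needs no spectral computation. It gets injectivity of $L_g$ and right-invariance of $\mr H$ for free from the $\U(n)$-invariance of $h$, and it makes ``Uhlmann connection $=$ metric connection of $h$'' the organizing idea.

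\textbf{A wording fix.} The real-trace pairing between $\hermitian_n$ and $\imi\hermitian_n$ is identically zero, not nondegenerate. What you actually need is that $\Real\tr(H\xi)=0$ for $H$ Hermitian and $\xi$ anti-Hermitian, while $(\eta,\xi)\mapsto\Real\tr(\eta\xi)$ is negative definite on $\imi\hermitian_n$. Together these show the orthogonality condition detects exactly the anti-Hermitian part of $X^\ast g$, which is the conclusion you drew.
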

    
\begin{Remark}
  Equation \eqref{UhlmannEqn-princStr} in the theorem indicates how Uhlmann's equation \eqref{UhlmannEqn} needs to be
  read. If one interprets the symbol $a$ in there as standing for the identity map on $\Gl (n,\C)$, then inserting a 
  tangent vector $X \in \tg_g\Gl(n,\C)$ in both sides of \eqref{UhlmannEqn}  results exactly in Equation
  \eqref{UhlmannEqn-princStr}. 
\end{Remark}

\begin{proof}
  Let $A$ be an element of the Lie algebra $\mathfrak{u} (n)\cong \imi \hermitian_n  $ of anti-Hermitian
  $n\times n$ matrices. The fundamental vector field $A_{\Gl (n,\C)}$ then maps every invertible $g$ to
  the vector $gA$. This entails that the vertical space $\mr V_g$ is given by
  \begin{equation}
  \label{VertCond}
     \mr V_g = \{ Y \in \mr T\Gl(n,\C)\mid g Y^\ast + Y g^\ast = 0 \}   \ .
  \end{equation}
  Thus, a tangent vector $X\in \Mat_n$ at $g$ is orthogonal to ${\mr V}_g$ if and only if
  \[
    \tr\big(A (X^\ast g - g^\ast X)\big)  = 2 \Real \tr(X^\ast gA) = 2 h (X, gA) = 0
    \quad \text{for all } A \in \imi \hermitian_n \ .
  \]
  Since $X^\ast g - g^\ast X$ is anti-Hermitian, $X$ is orthogonal to  ${\mr V}_g$ if and only if
  \[ X^\ast g - g^\ast X = 0  \ . \]
  This proves the equivalence of \eqref{ConnItem2} and \eqref{ConnItem3}. Moreover, it shows that the distribution
  $\mr H$ defined by Equation \eqref{HorDistr} is indeed the horizontal bundle of the principal connection
  defined by the $\U(n)$-invariant Riemannian metric $h$. In particular, $\mr H$ is complementary to the canonical
  vertical distribution $\mr V$ which means that  Equation \eqref{Splitting}  holds true.

  The connection $1$-form associated to the horizontal distribution $\mr H$ is given over the footpoint
  $g\in \Gl (n,\C)$ as the  orthogonal projection onto the horizontal space followed by the natural isomorphism
  between ${\mr H}_g$ and the Lie algebra $\mathfrak{u} (n)$: 
  \[
    {\bf A}_g : \tg_g\Gl (n,\C) = {\mr V}_g \oplus {\mr H}_g \to \mathfrak{u} (n) , \quad (gA, X) \mapsto A \ .
  \]
  By construction, the equivalence of \eqref{ConnItem1} and \eqref{ConnItem3} follows. Let us show that
  the thus defined $\bf A$ satisfies Equation \eqref{UhlmannEqn-princStr}.
  If $Y = gA$ with $A \in \mathfrak{u}(n)$ is a vertical vector over the footpoint $g$, then
  $Y \lrcorner {\bf A} = A$ and \eqref{UhlmannEqn-princStr} holds true since $A$ is anti-Hermitian.
  In case $X \in {\mr H}_g$, Equation \eqref{UhlmannEqn-princStr} holds   by equivalence of
  \eqref{ConnItem1} and \eqref{ConnItem2}. It remains to show that  Equation \eqref{UhlmannEqn-princStr}
  uniquely determines the $\mathfrak{u}$-valued form $\bf A$. To this end note that one can equivalently write
  this equation in the form
  \[
    (g^* g)^{-1} \big( g^*X - X^* g \big) = \big( I + \operatorname{Ad} (g^*g) \big) (X \lrcorner {\bf A}) \ , 
  \] 
  where $I$ is the identity. In case  $I + \operatorname{Ad} (g^*g) : \mathfrak{u} (n) \to \mathfrak{u} (n)$
  is injective, the claim follows. But that is the case, indeed, because if
  $A \in \ker \big( I + \operatorname{Ad} (g^*g)\big) $, then the vertical vector  $Y = gA$ 
  fulfills $g^*Y - Y^*g =0$, hence is also horizontal, and therefore $A$ has to vanish.     
\end{proof}

Uhlmann's equation \eqref{UhlmannEqn} immediately implies the transformation law 
\[
   R_u^* {\bf A} = u^\ast {\bf A} u = \Ad (u^\ast) {\bf A}
\]
under right translations $R_u : \Gl (n,\C) \to \Gl(n,\C)$, $g \mapsto gu$ for any $u \in \U(n)$.
Finally, the reader also easily checks that one has the standard gauge transformation law
\[
  \alpha^* {\bf A} = \Ad (u) {\bf A} + u^{-1} \d u 
\]
for any vertical automorphism $\alpha :  \Gl (n,\C) \to \Gl(n,\C)$, $g \mapsto g \, u (g)$
where now $u :  \Gl (n,\C) \to \U (n)$ is a smooth map.

Next, let us analyze the defining  equation \eqref{UhlmannEqn} for the singular strata. In  analogy to Theorem
\ref{S-UhlmConn-PrincStr}, we obtain the following.
\bsz
\label{S-UhlmConn-singStr}
For every natural rank $k \leq n$, the distribution  $ {\mr H}_{ \Mat_n(k)} = \bigsqcup_{a \in  \Mat_n(k)}  {\mr H}_a$ with
\beq
\label{HorDistr-sing}
{\mr H}_a = \left\{ X \in \tg_a \Mat_n(k) \mid a^\ast X - X^\ast a = 0  \right\}
\eeq
defines a connection on the bundle $\pi_{n,k}$. 
\esz

\bpf
Consider the restriction of the defining condition in \eqref{HorDistr-sing} to $\Mat_{\U(n-k)} \subset \Mat_n(k)$. Since 
${\rho}_{n,k} : \Mat_{\U(n-k)} \to \poshermitian_n(k)$, $a \mapsto aa^\ast$ is a $\U(k)$-principal bundle, by the same arguments as in  the proof of Theorem \ref{S-UhlmConn-PrincStr}, the restriction of $ {\mr H}_{ \Mat_n(k)}$ to
$\Mat_{\U(n-k) }$ defines a connection ${\mr H}_{\U(n-k)}$ on  ${\rho}_{n,k}$.  Now, by standard arguments, this connection defines a connection on the associated bundle $E_n(k)$ and the latter induces via $\Phi$ given by \eqref{Iso-Phi} a connection on $\pi_{n,k}$. 
To determine this connection, define for $u \in \U (n)$ the mapping
$$
\iota_{[u]}: \Mat_{\U(n-k)} \to E_n (k) \, , \quad \iota_{[u]} (a) := [(a,  [u])]  
$$
and consider the composition 
\beq
\label{Map-MUn-k-Mnk}
\Phi^\prime_{[(a,  [u])]} \circ (\iota_{[u]})^\prime_{a} : \tg_a \Mat_{\U(n-k)} \to \tg_{ a u} \Mat_n(k) \, .
\eeq
Given $a \in \Mat_{\U(n-k)}$ and $X \in \tg_a \Mat_{\U(n-k)}$, the image of $X$ at 
$\Phi([(a, [u])]) = au$ 
under the mapping \eqref{Map-MUn-k-Mnk} is $X u \subset \tg_{au}  \Mat_n(k)$. Next, every vertical vector at
$au$ is of the form
\beq
\label{vertV-Mn-k}
  Y = a A u  \quad\text{with }  A \in {\mf u}(k) \, .
\eeq
Since $\U(k)$ and $\U(n-k)$ commute, we have for every $v \in \U(n-k)$
\[ a A vu  =  a v A u = a A u  \, \]
showing that $ a A u  $ is unambiguously defined.
Thus, $X  \in \tg_{a u}  \Mat_n(k)$ is orthogonal to the vertical subspace 
${\mr V}_{au}$ if and only if
\[
  \tr(X^\ast a A u  ) = 0 \, .
\]
Using  $ aA u  =  a u u^\ast A u $ and $ u^\ast A u = \Ad(u^{-1}) (A) \in {\mf u}(n)$, we can rewrite this condition 
as follows:
\beq
\label{OrthC-Mn-k}
\tr \left( \Ad(u^{-1}) (A) (X^\ast a Au -  (aA u )^\ast  X ) \right) = 0 \, .
\eeq
This implies that $a A u $ is horizontal if and only if
\[
 X^\ast a A u  -  (a A u )^\ast  X  = 0 \,.
\]
Thus, the assertion follows.  
\epf

\begin{Proposition}\label{prop:uhlmann-connection-stratified-connection}
  The union ${\mr H} = \bigsqcup_{0 \leq k \leq n} {\mr H}_{ \Mat_n(k)} $ of the Uhlmann connections over the rank $k$ strata
  is a stratified vector subbundle of $T\Mat_n$ and constitutes a stratified connection which we also call \emph{Uhlmann connection}. 
\end{Proposition}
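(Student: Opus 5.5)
The plan is to check the definitions of a stratified vector bundle and a stratified connection from the appendix against the rank stratification of $\Mat_n$ from Proposition \ref{S-Whitney1}. Over each stratum $\Mat_n(k)$ we already have the smooth subbundle ${\mr H}_{\Mat_n(k)} \subset \tg\Mat_n(k)$ from Proposition \ref{S-UhlmConn-singStr}. It is complementary to the vertical bundle $\ker \tg\pi_{n,k}$, and right $\U(n)$-invariant. What remains are three points.

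\begin{enumerate}[(i)]
\item The pieces have the correct stratum-wise structure: each ${\mr H}_{\Mat_n(k)}$ is a smooth vector bundle of constant rank over $\Mat_n(k)$.
\item The union ${\mr H}$ is a closed subspace of $\tg\Mat_n = \Mat_n \times \Mat_n$ with the required gluing or frontier behaviour across strata.
\item Each ${\mr H}_{\Mat_n(k)}$ is horizontal for the stratified submersion $\pi_n$ of Corollary \ref{Corollary1}.
\end{enumerate}

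For (i), I would first identify ${\mr H}_a$ for $a \in \Mat_n(k)$ as the $h$-orthogonal complement of the vertical space ${\mr V}_a = \{ aB \mid B \in \mf u(n)\}$ inside $\tg_a\Mat_n(k)$. This is exactly the computation in the proof of Proposition \ref{S-UhlmConn-singStr}. Constancy of the rank then follows from two facts.
\begin{itemize}
\item $\dim \tg_a \Mat_n(k)$ is constant along the stratum.
\item $\dim {\mr V}_a = \dim \U(n)/\U(n-k)$ is constant, since the isotropy type is constant.
\end{itemize}
So the rank of ${\mr H}_{\Mat_n(k)}$ equals $\dim \poshermitian_n(k)$. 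Smoothness holds because ${\mr H}_{\Mat_n(k)}$ is the kernel of the smooth, constant-rank bundle map $X \mapsto a^*X - X^*a$ restricted to $\tg\Mat_n(k)$. Equivalently, it is the image of an orthogonal projection depending smoothly on $a$.

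For (iii), I would use that $\tg_a\pi_{n,k}$ restricted to ${\mr H}_a$ is an isomorphism onto $\tg_{aa^*}\poshermitian_n(k)$. This follows from the splitting $\tg_a\Mat_n(k) = {\mr V}_a \oplus {\mr H}_a$, in analogy with Theorem \ref{S-UhlmConn-PrincStr}.

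The main obstacle is (ii): controlling the behaviour of ${\mr H}$ near lower strata, that is, showing ${\mr H}$ is a stratified subbundle rather than just a disjoint union. My first attempt would be to write ${\mr H}$ globally as
\[
{\mr H} = \{ (a,X) \in \Mat_n\times\Mat_n \mid X \in \tg_a\Mat_n(\operatorname{rk} a),\ a^*X - X^*a = 0 \}.
\]
Both defining conditions are semialgebraic, so ${\mr H}$ is a semialgebraic set stratified by the pieces ${\mr H}_{\Mat_n(k)}$. I would then verify the frontier condition: ${\mr H}_{\Mat_n(l)}$ lies in the closure of ${\mr H}_{\Mat_n(k)}$ for $l\le k$. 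To do this, approximate $a$ of rank $l$ by $a_t = a + t\,c$ of rank $k$, using right polar decomposition and the normal form of Remark \ref{B-GeomStrata}. A horizontal $X$ at $a$ then deforms to horizontal vectors at $a_t$, which can be done explicitly in the diagonal normal form $\operatorname{diag}(\lambda_1,\dots,\lambda_l,t,\dots,t,0,\dots)$.

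If the appendix also requires Whitney-type regularity of the total space, I would obtain it from Proposition \ref{S-Whitney1}. The idea is that $\tg\Mat_n(k)$ lies over a (b)-regular stratification. In addition, by $\U(n)$-equivariance and the slice theorem, the horizontal subbundles are locally modelled on products of a stratum with a fixed linear space.
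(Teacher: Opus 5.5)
Your argument is correct in substance, but it runs in the opposite direction from the paper's. The paper's proof consists only of a continuity remark: if $a_j\to a$, $X_j\in\mr H_{a_j}$ and $X_j\to X$, then $a^*X-X^*a=0$. In other words, a limit of horizontal vectors that lies in $\tg_a\Mat_n(l)$ is horizontal, so $\mr H$ is closed inside the stratified tangent bundle $\bigcup_k\tg\Mat_n(k)$. Your step (ii) proves instead that $\mr H_{\Mat_n(l)}\subset\overline{\mr H_{\Mat_n(k)}}$ for $l\le k$: every horizontal vector over a lower stratum is a limit of horizontal vectors over a higher one. This is exactly the criterion of Lemma \ref{StratConn-lem}, and condition (3) of the lemma on stratified subbundles preceding it. The paper's continuity argument gives only the reverse inclusion and does not supply it. Together with your stratumwise checks (i) and (iii), your route therefore verifies the definition the appendix actually sets up. The paper's argument contributes the complementary closedness.

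Two points need repair.

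First, your stated goal in (ii), that $\mr H$ is closed in $\Mat_n\times\Mat_n$, is false. Take $D=\operatorname{diag}(\Sigma,0_{n-l})$ with $\Sigma$ positive diagonal of size $l<n$, and $D_t=\operatorname{diag}(\Sigma,tI_{n-l})$. The vector $Y=\operatorname{diag}(0_l,S)$, with $S$ Hermitian and nonzero, lies in $\mr H_{D_t}$ for every $t>0$, since $D_t^*Y=\operatorname{diag}(0,tS)$. Yet $Y\notin\tg_D\Mat_n(l)$. Only closedness relative to $\bigcup_k\tg\Mat_n(k)$ holds, and your argument does not need it.

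Second, the deformation has to be written down. The reduction to normal form also uses invariance of $\mr H$ under left as well as right unitary multiplication, which you did not state; it holds because $(vaw)^*(vXw)=w^*a^*Xw$. Use blocks of sizes $l$, $k-l$, $n-k$. Then $X\in\mr H_D$ if and only if $X_{ij}=0$ for all $j\ge 2$ and $\Sigma X_{11}$ is Hermitian. With $D_t=\operatorname{diag}(\Sigma,tI_{k-l},0_{n-k})$ one has
\[
Y_t=\begin{pmatrix} X_{11} & t\,\Sigma^{-1}X_{21}^* & 0\\ X_{21} & 0 & 0\\ X_{31} & 0 & 0\end{pmatrix}\in\mr H_{D_t},\qquad Y_t\to X \quad (t\to 0).
\]
Indeed, $D_tY_t$ is Hermitian and the lower right $(n-k)\times(n-k)$ block of $Y_t$ vanishes.

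Your final paragraph can be dropped. The appendix lemmas reduce everything to this approximation property. Moreover, as written it would not work: the fibre dimension $2nk-k^2$ of $\mr H$ jumps between strata, so a local product with a fixed linear space exists only stratumwise.
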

\begin{proof}
Observe that the condition $a^*X-X^*a$ is continuous in $a$ and $X$.
That is, given a sequence $(X_k)_{k\in\N}$ with respective basepoints $(a_k)_{k\in\N}$, if
$a_k^* X_k - X^*_k a_k = 0$ and if $X_k$ and $a_k$ have limits $X$ and $a$ in $T \Mat_n$ and in $\Mat_n$, respectively,
then $a^*X-X^*a=0$ and so the limit $X$ is again in the kernel of the corresponding connection $1$-form.
\end{proof}

Next, let us take up the discussion in Remark \ref{B-GeomStrata}.
\bbm
\label{GeomConn-sing}
On the singular strata, Uhlmann's connection has the following geometric interpretation. 
\ben
\item\label{Ite1-GeomConn-sing}
  By \eqref{Emb-Pn-Mn}, $P_n(k)$ may be regarded as a subbundle of $\pi_{n,k}$. Thus, using $p p^\ast = \id_{\C^k}$,
  we obtain the following restriction of \eqref{Emb-Pn-Mn} to $P_n(k)$:
\begin{equation}
\label{UhlmannConn-sing}
  f^\ast \d f - (\d f^\ast) f = f^\ast f {\bf A} + {\bf A} f^\ast f \, .
\end{equation}
Now, by the same arguments as in the principal stratum case, one shows that $\bf A$ is a $ {\mf u}(k)$-valued connection form 
on the principal bundle $\lambda_{n,k}$. 
\item\label{Ite2-GeomConn-sing}
  By Point \ref{Ite2-B-GeomStrata}. of Remark \ref{B-GeomStrata}, the Stiefel bundle $\overline{\lambda}_{n,k}$ is a
  subbundle of $\pi_{n,k}$. By \eqref{Emb-Sn-Pn}  and \eqref{Emb-Pn-Mn}, every tangent vector
  $X \in  \tg_a i_{n,k} (S_n(k))$  fulfills
  \[
    X^\ast a + a X^\ast = 0\quad \text{for all } a \in  i_{n,k} (S_n(k)) \ .
  \] 
  Using this, the horizontality condition in \eqref{HorDistr-sing} implies that $X$ is horizontal if and only if
  $a^\ast X = 0$. From this condition, we read off the connection form
\beq
\label{UhlmConn-StiefelB}
   {\bf A} =  a^\ast \d a \, .
\eeq
Its pullback under $i_{n,k}$ yields the canonical connection of the Stiefel bundle.
\een
\ebm

It remains to implement the normalization condition 
\beq
\label{NormCond}
\tr(a a^\ast) = 1\, .
\eeq 
If we impose it, then $\pi_n$ is given by 
\eqref{Fibr-W-Omega} and we have, additionally to the action of $\U(n)$, the action
\[
  \R_+ \times \Mat_n \to \Mat_n\, , \quad (\lambda, w) \mapsto \lambda \cdot w \, ,
\] 
of the multiplicative Lie group $\R_+$ on the fibers of $\pi_n$. Let us denote the space of elements
$a \in \Mat_n$ fulfilling \eqref{NormCond} by  $S \Mat_n$ or $\hat\Mat_n$ and, as before, the space of (normalized) density matrices
by $\densityops_n$. We clearly have
\[
  S\Mat_n = S^{2 n^2 -1} 
\]  
and, thus, by implementing the normalization condition, we obtain the projection 
\[
  \hat \pi_n : S^{2 n^2 -1} \to \densityops_n \ .
\]
Restricting $\hat\pi_n$ to $ \hat\Mat_n (n) := S\Mat_n \cap \Gl (n,\C)$, one obtains the $\U(n)$-principal bundle
\[
  \hat \pi_n (n) : \hat\Mat_n (n) \to \densityops_n (n) \ ,
\]
where $\densityops_n (k)$ denotes the space of rank $k$ density matrices. 
The principal bundle $\hat \pi_n (n)$ is endowed with the ${\mf u}(n)$-valued connection form $\bf A$
defined by \eqref{UhlmannEqn}, with $a $ fulfilling \eqref{NormCond}. 

Next, let us discuss the singular strata. Here, inserting the right hand side of
\eqref{Struct-M-n-k} into \eqref{NormCond}, one obtains
\beq
\label{NormCond-sing}
  \tr(f f^\ast) = 1 \, .
\eeq
Thus, we obtain the bundles $\hat \pi_n(k): \hat \Mat_n(k) \to \densityops_n (k)$, associated with the principal bundles 
\[
  \hat \lambda_{n,k}: \hat P_n(k) \to \densityops_n(k) \, ,
\] 
where the embedding  $\hat  P_n(k) \subset  P_n(k) $ is defined by \eqref{NormCond-sing}. Correspondingly,    $\hat  P_n(k)$ 
is endowed with the ${\mf u}(k)$-valued connection form $\bf A$ defined by \eqref{UhlmannConn-sing}, where $f$ fulfills the normalization condition \eqref{NormCond-sing}. Note that for the special case $k = 1$ the normalization condition 
\eqref{NormCond-sing} coincides with the embedding condition \eqref{Emb-Sn-Pn} and, thus, the bundle
$\hat \pi_{n,1}$ is associated with the Stiefel bundle $\overline{\lambda}_{n,1}$. Consequently, on that stratum the
${\mf u}(1)$-valued connection form $\bf A$ coincides with the canonical connection form of the Stiefel bundle
as given by \eqref{UhlmConn-StiefelB}, cf.\ Point \ref{Ite2-GeomConn-sing} of Remark \ref{GeomConn-sing}. Finally, note that by implementing the normalization condition we arrive at topologically non-trivial bundle 
structures. 

\bbs
\label{B-SpecialCase}
Consider the special case $n = 2$. Here, $\hat \Mat_2(2) = S^7$ and $\densityops_2(2)$ coincides with the interior of the  $3$-dimensional disc ${\mr D}^3$. There is one singular stratum
$\hat \pi_{2,1}: \hat \Mat_2 (1) \to \densityops_2(1) $, where
\[ 
  \hat \Mat_2 (1) = S_2(1) \times_{\U(1)} \U(1) \backslash \U(2) \cong S^3 \times_{\U(1)} S^3
\]
and $\densityops_2(1) $ coincides with the boundary $S^2$ of ${\mr D}^3$. By the above discussion, the
connection $\bf A$ on the singular stratum is given by the canonical connection of the Stiefel bundle.
For $n= 2$, the latter coincides with the canonical Hopf bundle connection (gauge potential of a magnetic monopole). Thus, on the singular stratum we know $\bf A$ explicitly. 
For $n = 2$, the connection $\bf A$ on the generic stratum may be explicitly determined, too. This was accomplished in \cite{DR-2}. Moreover, if one endows the space of density matrices with 
the Riemannian structure induced by the Bures metric, then $\bf A$ can be shown to fulfil the source-free Yang-Mills equation for the generic stratum and 
the source free Maxwell equation for the singular stratum. Finally, we note that the connection on the generic stratum turns out to be closely related to the canonical connection on the quaternionic Hopf bundle (instanton bundle), see \cite{DR-1} for the details. 
\ebs

To close this section, let us mention that there exists a natural class of connections on the above described bundles containing Uhlmann's connection as a distinguished special case. This was shown in \cite{DR-2}.

%

\section {Comments and Outlook}
\label{Outlook}


Recall the evolution equation \eqref{EvEqu-DensM}. Let us assume that the quantum Hamiltonian $\hat H$ does not depend on $\rho$. In that case, this equation may be studied as follows: we endow the space of density matrices with its classical Lie-Poisson bracket,\footnote{Behind, there is the dual space ${\mr u}^\ast (n)$ of the unitary Lie algebra, endowed with the minus-bracket.}
$$
\{ f,g \} (\rho) = - i \tr \left( \rho \cdot [ \nabla f (\rho), \nabla g(\rho)]  \right) \, ,
$$
and define the classical Hamiltonian function $H(\rho) := \frac{1}{\hbar} \tr( \rho \hat H )$. Then,  $\nabla H = \frac{1}{\hbar} \hat H$ and thus the evolution equation \eqref{EvEqu-DensM} matches the classical Hamiltonian 
equation $\dot{\rho} = \{\rho, H\} $ and we obtain the following Hamiltonian vector field on the stratified space:
$$ 
X_H (\rho) = \frac{1}{i \hbar} [{\hat H}, \rho ] \, .
$$  
This shows that, for $\hat H$ not depending on $\rho$,  the quantum dynamics governed by \eqref{EvEqu-DensM} may be studied in 
terms of the Hamiltonian vector field $ X_H $ on the stratified space of density matrices.\footnote{There is a huge literature on the 
subject, see e.g. \cite{Kibble}, \cite{Brody}, \cite{Grabowski}, \cite{Sinha} and references therein.} For that purpose,  the 
Whitney {\rm b}-regularity of our theory proved in Section \ref{Space-densMatr} turns out to be significant; it guarantees 
a nice dynamical  behaviour of $  X_H $ and of its flow.  

To illustrate this, let us consider a differentiable stratified space $(X,\calC^\infty(X))$ and let us discuss the behaviour of a vector field and of its integral curves in the vicinity of a singularity (a lower stratum). We use the notation introduced in  Definition 
\ref{Def-WhitneyReg}, where Whitney (\rm b)-regularity is defined. As the stratified space may be regarded as embedded in $\R^n$, we may freely use the terminology of Euclidean geometry. Let $R$ and $S$ be two strata belonging to $(X,\calC^\infty(X))$ such that 
$R \subset \overline{S}$.  To start with, we construct a tubular neighbourhood $U \subset S$ around $R$ by introducing the projection map 
$\pi: U \to R$ given by assigning to any $y \in U$ the unique closest point $x \in R$, that is $\pi(y) = x$. Then, the tubular neighbourhood diffeomorphism is given by
$$
\chi: U \to {\mr N} R \, , \quad \chi(y) = (\pi(y), y - \pi(y) ) \, .
$$
The second component is the secant vector living in the normal fibre ${\mr N}_x R$, that is, it is orthogonal to $\tg_x R$. Moreover, we define the distance function 
\beq
\label{DistFunction}
\delta (y) := \parallel y - \pi(y) \parallel^2 \, .
\eeq
Now,  Whitney (\rm b)-regularity implies the following. 
\ben
\item
Let $\xi$ be vector field  on $(X,\calC^\infty(X))$.
If an integral curve $t \mapsto \gamma(t)$ of $\xi$  in the stratum $S$  approaches 
a boundary point $x_0$ of the lower-dimensional stratum $R$, then this must happen along a well-defined asymptotic direction in the projective space. Indeed, by definition, the tangent space $\R  \cdot \dot{\gamma}(t)$ is a subspace of the moving tangent space 
$\tg_{\gamma(t)} S$. For a sequence $t_k \to \infty$, the sequence of $ y_k = \gamma (t_k) $ approaches  $x_0$. 
Let the sequence of secants $\ell_k = (y_k, x_0) $ converge to the line $\ell$ and the tangent spaces $\tg_{y_k} S$ converge to the 
subspace $T \subset \R^n$. Then, by (\rm b)-regularity, we have $\ell \subset T$, that is, the directions of approach must stay contained in the limiting tangent spaces. As $\xi$ is smooth in the ambient space, the velocity vector $\dot{\gamma}$ aligns asymptotically with the secant line $\ell$. 

In order to show that the  vector field $\xi$ undergoes a continuous transition, when moving from the upper stratum $S$ toward the lower stratum $R$,   Whitney (\rm b)-regularity is necessary, because it guarantees that the tangent spaces  $\tg_{y_i} S$ approach $T$ (containing $\tg_{x_0} R$) smoothly. However, this may not be sufficient. We need some further assumption, e.g. we can assume that our space is a stratified Poisson manifold. We are not going into further details, here.

\item
An integral curve $t \mapsto \gamma(t)$ of $\xi$ starting in the interior of the upper stratum $S$ cannot reach the singular stratum $R$ in a finite amount of time. Indeed, by \eqref{DistFunction}, we have
$$
\dot{\delta}(t) = \langle \nabla \delta (\gamma(t)), \dot{ \gamma}(t) \rangle 
= \langle \nabla \delta (\gamma(t)), \xi (\gamma(t)) \rangle \, .
$$
As the gradient $\nabla \delta (\gamma(t))$ is a vector pointing along the secant line connecting $y$ with its closest point $\pi(y) \in R$, its norm scales proportionally to the distance, that is, $\parallel \nabla \delta (y) \parallel \leq c_1 \sqrt{\delta (y)}$. Now, by Whitney 
(\rm b)-regularity, the secant line $\nabla \delta (y)$ cannot become orthogonal to $\tg_y S$ and, thus, the projection of $\xi$ onto this secant line must be bounded by the distance function itself, that is, $|\langle \nabla \delta (y), \xi(y) \rangle | \leq c_2 \delta (y)$. This yields 
\beq
\label{GroenwallInequality}
| \dot{\delta}(t) | \leq c_2 \delta(t) \, .
\eeq
This implies an infinite arrival time, because integrating the resulting lower bound following from \eqref{GroenwallInequality} yields 
$\delta(t) \geq \delta(0) e^{-c_2 t} $, which is an example of Gr\"onwall's Inequality \cite{Groenwall}. Note that integrating the 
resulting upper bound yields bounds on the speed at which $\delta $ can grow or decay. 
\item
As a consequence of Definition \ref{Def-WhitneyReg}, one has the controlled lift of vector fields from $R$ to $S$ of Mather.\footnote{This is the basic input for proving the famous Thom-Mather Isotopy Theorem.} 
Let $\zeta$ be a smooth vector field on $R$. Then, one defines its lift $\hat \zeta$ of 
$\zeta$ to $S$  by  pointwise projecting $\zeta ( \pi (y ))$ onto $\tg_y S$ in such a way that 
$\delta^\prime \big({\hat \zeta}(y) \big) = 0$, 
that is, in such a way that ${\hat \zeta}(y) $ is orthogonal to the secant line.\footnote{In order to fulfil this condition, one has to add to the standard projection a certain correction term, which we are not going to spell out here.} Thus, by construction, the lift so defined fulfils the following control  conditions: 
\beq
\label{ControlCond}
\pi^\prime \big( {\hat \zeta}(y) \big) = \zeta (\pi(y) ) \, , \quad \delta^\prime \big({\hat \zeta}(y) \big) = 0 \, ,
\eeq
for any $y \in U \subset S$. To summarize, Whitney  (\rm b)-regularity allows one to extend vector fields on $R$ into controlled tubular neighbourhoods. The second condition in \eqref{ControlCond} is to be regarded as a distance preservation condition. It guarantees that 
the flow of the lifted vector field slides parallel to the lower stratum, thus, preserving the transversal slices as one moves along $R$. 
The controlled lift existence implies a number of differential-topological consequences, which we cannot discuss here. 
\een

Finally, let us recall the following remark made by Uhlmann in   \cite[Sec.4]{AU-1}:
\begin{quote}
\textit{In order to bypass complicated technicalities, I restrict myself to
finite-dimensional Hilbert spaces although this assumption does not show up
explicitly and is probably inessential.}
\end{quote}
Up to our knowledge, the extension of the above picture to infinite-dimensional Hilbert spaces was never accomplished. 
On the other hand, some modern developments -- in particular, the study of infinite-dimensional lattice systems -- suggest that it is worthwile to study the infinite-dimensional case in detail. This brings us to the following program for future work:
\begin{enumerate}
\item
Study the geometry of the stratification defined by $\pi$ in the infinite-dimensional case.
\item
Investigate the infinite-dimensional counterpart of Uhlmann's connection $\bf A$ including its geometric meaning on the 
non-generic strata of $\pi$.
\item
Study the evolution equation \eqref{EvEqu-DensM} and, in particular, parallel lifts of solutions of this equation.
\end{enumerate}
Our first goal will be to show that the space of density matrices $\calD$ on a (separable)
Hilbert space $\calH$ possesses a natural stratification or in other words decomposition
into smooth, possibly infinite dimensional Banach manifolds. It turns out the novel stratification theory for spaces of density operators  
described in the appendix may be extented to the  infinite-dimensional  context. This will be the first corner stone for accomplishing this 
program.

\appendix


\section{Whitney stratifications}
\label{appendix}


For the convenience of the reader, we outline here the fundamentals of stratification theory, focusing specifically on the Whitney conditions.
We restrict our attention to (a) and (b)-regularity of orbit spaces of compact group actions on manifolds.
For a comprehensive treatment and detailed proofs of the Thom-Mather 
stratification theory we refer the reader to {\sc Mather} \cite{MatNTS,MatSM} and to {\sc Pflaum} \cite{PflaumBook}.
For the stratification theory of orbit spaces we also refer to {\sc Bierstone} \cite{BierstoneOrbitspaces}
and {\sc Duistermaat--Kolk} \cite{DuistermaatKolk}.

\begin{Definition}
  Let $X$ be a second-countable paracompact Hausdorff space. By a \emph{manifold decomposition} or a
  \emph{prestratification} of $X$ 
  one understands a collection $\calZ$ of locally closed non-empty subspaces $S\subset X$ called \emph{pieces} such that
  the following conditions hold true:
  \begin{enumerate}[{\rm (DS1)}]
  \item\label{ite1:decomposition} Each piece $S$ is a topological manifold of some finite dimension $\dim (S)$.
  \item The collection  $\calZ$ is a partition of $X$ that means $X$ coincides with the union  $\bigcup_{S\in \calZ} S$
    and for all $S_1,S_2\in \calZ$ one either has $S_1 =S_2$ or $S_1\cap S_2 =\emptyset$.
  \item\label{ite:local-finiteness}
    The partition $\calZ$ is locally finite that means for every point $x \in X$ there exists a neighborhood
    $U\subset X$ which meets only finitely many elements of  $\calZ$.
 \item\label{ite:condition-frontier}
   The \emph{condition of frontier} is fulfilled which means that for all $S,R\in \calZ$
   the relation $R\cap \overline{S}\neq \emptyset$ implies $R \subset \overline{S}$. One says in this situation that
   $R$ is \emph{incident} to $S$ and writes $R \leq S$. 
 \end{enumerate}
 A topological space $X$ endowed with a manifold decomposition will be termed a \emph{decomposed space}.
 We silently assume in this situation that $X$ is second-countable, paracompact and Hausdorff. 
\end{Definition}

It is a straightforward exercise to verify that the incidence relation on a manifold decomposition $\calZ$
of a topological space $X$ is an order relation. In other words,  $\calZ$ carries naturally the structure of a poset.
We therefore denote the incidence relation on  $\calZ$ by $\leq$. 
\begin{Remark}
  Condition (DS\ref{ite1:decomposition}) can be relaxed to allow for infinite dimensional manifolds like Banach, Hilbert or
  Fr\'echet manifolds.  In this paper we do not need this broader generality, but it will become necessary in future work where we
  plan to consider the stratification theory of state spaces of infinite dimensional Hilbert spaces. 
\end{Remark} 

\begin{Definition}
  Given decomposed spaces $(X,\calP)$ and $(Y,\calQ)$ one calls a continuous map $f:X\to Y$ a
  \emph{morphism of  decomposed spaces} if for all pieces $P\in\calP$ the image $f(P)$ is contained in a piece of
  the decomposition $\calQ$. In case $X=Y$ and the identity map on $X$ is a morphism of decomposed spaces from
  $(X,\calP)$ to $(X,\calQ)$ one calls $\calP$ \emph{coarser} than $\calQ$. 
  The coarser relation then is an order relation on the set of manifold decompositions of $X$.
\end{Definition}

Mather's approach to stratifications in \cite{MatSM} essentially consists in localizing decompositions.
Before we write down the precise definition let us  briefly recall that two subsets $A,B \subset X$ of a topological space $X$ are said to be \emph{equivalent at} $x\in X$
if there exists an open neighborhood $U$ of $x$ such that $A\cap U = B\cap U$. Obviously, this defines an equivalence
relation on the power set of $X$. We denote the equivalence class of $A$ with respect to this equivalence relation
by $[A]_x$ and call it the \emph{set germ} or the \emph{germ} of $A$ at $x$.

\begin{Definition}[{cf.\ \cite[I.2.]{MatSM} \& \cite[Def.~1.2.2]{PflaumBook}}]
  Let $X$ be a second-countable paracompact Hausdorff space. A \emph{stratification} of $X$ then is a map $\calS$ which assigns to each
  point $x\in X$ the germ $\calS_x$ at $x$ of a closed subset of $X$ such that the following holds:
  \begin{itemize}
  \item[{\rm (ST)}]
    For each $x\in X$ there is a an open neighborhood $U$ of $x$ and a manifold decomposition $\calP_U$ of $U$
    such that for each $y\in U$ the set germ $\calS_x$ coincides with the germ $[P(x)]_x$, where $P(x) \in \calP_U$ is the piece
    of the decomposition in which $x$ lies.     
  \end{itemize}
  A topological space endowed with a stratification will be termed a \emph{stratified space}. 
\end{Definition}

\begin{Example}
  A manifold decomposition $\calZ$ of $X$ defines a stratification by putting $\calS_x = [P(x)]_x$, where $P(x)\in \calZ$
  is the piece containing $x$. We call the stratification $\calS$ \emph{induced} or \emph{defined} by $\calZ$. 
\end{Example}

\begin{Propdef}[{cf.\ \cite[Prop.\ 1.2.7]{PflaumBook}}]
  Let $(X,\calS)$ be a stratified space. Then there exists a coarsest decomposition defing the stratification.
  Often this decomposition is also denoted by the same symbol $\calS$ as for the underlying stratification. The pieces $S$
  of this  decomposition are called the \emph{strata} of the stratified space $X$.
\end{Propdef}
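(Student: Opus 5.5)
To prove the statement I will build the coarsest decomposition directly from the germs. For $x\in X$, let $S(x)$ be the set of points $y\in X$ that can be joined to $x$ by a finite chain $x=y_0,y_1,\dots,y_m=y$ with $\calS_{y_{i}}=[P]_{y_i}$ and $\calS_{y_{i+1}}=[P]_{y_{i+1}}$ for some piece $P$ of some local decomposition $\calP_U$ from (ST) containing both. In other words, $S(x)$ is the union of all connected manifold pieces that represent the germs along the way. The candidate decomposition is $\calS=\{S(x)\mid x\in X\}$. By construction it is a partition, since the chain relation is an equivalence relation. Every local decomposition $\calP_U$ inducing the stratification has each of its pieces contained in a single $S(x)$, so $\calS$ is coarser than every defining decomposition. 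This gives the minimality part of the claim.

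The main work is to show that $\calS$ is itself a manifold decomposition inducing $\calS_x$, and I will carry it out in four steps.

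\emph{Step 1 (germ compatibility).} For $y$ near $x$ in a chart $U$, the germ $\calS_y$ is represented both by $P(y)$ and by $S(y)\cap V$ for a small neighbourhood $V$. I will show $S(x)\cap V=P(x)\cap V$ for suitably small $V$ using local finiteness (DS\ref{ite:local-finiteness}). Only finitely many pieces meet $V$, and two pieces of different decompositions that represent the same germ at a point must agree near that point, because germs of closed sets determine the set near the point.

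\emph{Step 2 (manifold structure).} With Step 1 in hand, $S(x)$ is locally homeomorphic to a piece of some $\calP_U$, hence a topological manifold. Dimension is locally constant along chains, so it is constant on $S(x)$. Local closedness follows the same way.

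\emph{Step 3 (local finiteness).} This is inherited from the $\calP_U$ because of Step 1.

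\emph{Step 4 (condition of frontier).} I expect this to be the main obstacle. The difficulty is that $\overline{S}$ is formed globally, while (DS\ref{ite:condition-frontier}) is only known for each local $\calP_U$. The plan is to take $R\cap\overline{S}\ni z$ and find a local decomposition around $z$ in which the local pieces of $R$ and $S$ satisfy the frontier condition. This shows that the set of points of $R$ lying in $\overline{S}$ is open in $R$. It is also closed in $R$, since $\overline{S}$ is closed. A difficulty arises if $R$ is not connected, and I will handle it by working chainwise: each link of a chain defining $R$ lies in one piece of some $\calP_U$, so the open-closed argument propagates along the chain.

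Finally, I need to check that second countability and paracompactness of $X$ give at most countably many strata. This holds because each stratum is locally one of finitely many pieces, so the strata are countable. With that, $\calS$ is the required coarsest decomposition, and the strata are its pieces, as stated in \cite[Prop.\ 1.2.7]{PflaumBook}.
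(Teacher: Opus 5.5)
\textbf{Gap in Step 1.} Step 1 is where the real content of the proposition lies, and the argument you give there does not cover it. Your equivalence relation is generated by links taken from arbitrary local decompositions, and a chain starting at $x$ may leave the chart $U$. So nothing you say rules out the following: a piece $Q\neq P(x)$ of $\calP_U$ with $x\in\overline{Q}$ (so $P(x)<Q$) becomes chain-equivalent to $x$ through a detour outside $U$. If that happened, $S(x)$ would contain $Q\cap V$ for every neighbourhood $V$ of $x$. Then $[S(x)]_x\neq\calS_x$, and $S(x)$ would in general not even be a manifold at $x$.

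Your justification, that two pieces of different decompositions representing the same germ at a point agree near it, only controls the pieces that \emph{contain} $x$, not the pieces that \emph{accumulate} at $x$. Local finiteness only tells you there are finitely many such $Q$; it does not put them in other classes. Also, the sentence ``$\calS_y$ is represented by $S(y)\cap V$'' assumes what is to be proved.

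\textbf{The missing invariant.} You need an invariant that is constant along links and strictly changes along incidences $P<Q$. Dimension would do if $R<S$ forced $\dim R<\dim S$, but that is not among (DS1)--(DS4). In the closure of $\{(t,\sin(1/t)) : t>0\}$, the pieces graph, $\{0\}\times(-1,1)$, $\{(0,1)\}$, $\{(0,-1)\}$ form a decomposition. In it a $1$-dimensional piece lies in the frontier of another $1$-dimensional piece.

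What works is depth. For a decomposition $\calP_U$ and $x\in U$, let $d_{\calP}(x)$ be the maximal length $k$ of a chain $P(x)=S_0<S_1<\dots<S_k$ in $\calP_U$. This is finite, because every $S_i$ accumulates at $x$ and local finiteness applies. One checks two facts:
\begin{itemize}
\item $d_{\calP}(x)=0$ if and only if $\calS_x=[X]_x$.
\item Otherwise $d_{\calP}(x)=1+\max\{d_{\calP}(w) : w\in V\setminus P(x)\}$ for all small $V$.
\end{itemize}
For small $V$ the set $V\setminus P(x)$ depends only on $\calS_x$. Induction on depth then shows that $d_{\calP}(x)$ does not depend on $\calP$.

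Depth is constant on each piece of $\calP_U$, hence along links and on chain classes. Moreover, $P(x)<Q$ implies that the depth on $Q$ is strictly smaller than $d(x)$. Now choose $V$ meeting only pieces $Q\geq P(x)$; then $S(x)\cap V=P(x)\cap V$, and Steps 2 and 3 go through as you describe.

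\textbf{Step 4 is routine.} By contrast, Step 4, which you flagged as the main obstacle, needs none of this and no open--closed argument. Suppose $y\in\overline{S}$ and $y,z$ lie in a common piece $P$ of $\calP_U$. Every piece of $\calP_U$ lies in a single class, so local finiteness gives a piece $S'\subset S$ of $\calP_U$ with $y\in\overline{S'}$. The frontier condition in $\calP_U$ then gives $P\subset\overline{S'}$, so $z\in\overline{S}$. Propagating this along chains yields (DS4) directly.
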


\begin{Example}[{cf.\ \cite[Sec.\ 4.3]{PflaumBook}}]
  Let $M$ be a finite dimensional manifold endowed with a smooth right $G$-action by a compact Lie group $G$.
  For every point $p\in M$ denote by $G_p$ the isotropy group of $p$ that means the set of all $g \in G$ such that
  $pg =p$.  Denote by $(G_p)$ the conjugacy class of the subgroup $G_p$ and by $M_{(G_p)}$ the set of all points $q$ in $M$   
  whose isotropy group $G_q$ is conjugate to $G_p$. Assigning to each point $p\in M$ the set germ $\big[M_{(G_p)}\big]_p$ then 
  is a stratification of $M$ called the \emph{stratification by orbit type}. Moreover, if $X =M/G$ denotes the orbit space
  endowed with the quotient topology and $\pi : M \to X$ the canonical projection, then the map which assigns to each orbit $[p] = pG \in  X$ the
  set-germ $\big[ \pi( M_{(G_p)}) \big]_{[p]}$ is a stratification of $X$ which one also calls \emph{stratification by orbit type}.
  These stratification results for $M$ and the orbit space $X$ are both consequences of the classical slice theorem of Palais. 
  Note that in infinite dimensions such a slice theorem holds only in particular cases and that therefore stratification by orbit type in infinite dimensions is more subtle; see \cite{DiezRudolph} for details.  
\end{Example}

The last example is crucial for this paper and we will further elaborate it. Before we come to this let us briefly examine stratified
spaces (locally) embedded in euclidean space.

\begin{Definition}[{cf.\ \cite[Sec.\ 1.3]{PflaumBook}}]
  By a \emph{smooth structure} on a stratified space $(X\calS)$ we understand an algebra of continuous functions on $X$, usually denoted by $\calC^\infty (X)$,
  for which there exists a homeomorphism $\iota: X \to A $ to a closed subset $A \subset \R^n$
  such that the pull-back
  \[ \iota^*: \calC^\infty (\R^n) \to \calC^\infty (X) , \quad f \mapsto f \circ \iota \]
  defines an isomorphism $\calC^\infty (\R^n)/ \calJ_X \to \calC^\infty (X)$, where
  $\calJ_X = \{ f \in \calC^\infty (\R^n) | \, f|_X = 0\}$
  is the vanishing ideal on $X$, and such that for each stratum $S\subset X$ the restriction
  $\iota|_S : S \to A$ defines a homeomorphism onto a smooth submanifold $\iota(S) \subset \R^n$. 
  One then calls the pair $(X,\calC^\infty(X))$ a \emph{differentiable stratified space} and $\iota$ a \emph{chart} for
  it.  
\end{Definition}

\begin{Example}
  Let $X \subset \R^n$ be a real algebraic or a real analytic variety. Then $X$ possess a canonical
  decomposition into smooth algebraic respectively  real analytic submanifolds of the ambient $\R^n$ which endows
  $X$ with the canonical structure of  a differentiable stratified space. 
\end{Example}

\begin{Example}[{cf.\ \cite[Sec.\ 4.4]{PflaumBook}}]\label{ex:differentiable-stratified-space-structure-orbit-space}
  Again let $M$ be  a smooth manifold endowed with a right action by a compact Lie group $G$. 
  Denote by $X$ the orbit space of this right action and by $\pi : M \to X$ the canoncial projection or in other words
  the orbit map. Then the algebra $\big(\calC^\infty(M)\big)^G$ of $G$-invariant smooth functions on $M$ can be identified
  canonically with a  subalgebra $\calC^\infty (X) \subset \calC (X)$ of the algebra of continuous  functions on $X$
  via the push-forward
  \[ \pi_* :   \big(\calC^\infty(M)\big)^G \to \calC (X), \quad f \mapsto \pi_*f := \big( X \ni [p] \mapsto f(p) \big)\ .\]
  Together with the orbit type stratification, the orbit space $X$ thus becomes a differentiable stratified space.    
\end{Example}

\begin{Example}
  Let $\calP$ be a poset with order relation $\leq$. By a $\calP$-\emph{stratified manifold}  we
  then understand a manifold $M$ endowed with a family $\left( M_{\leq P}\right)_{P\in \calP}$ of closed submanifolds such that
  for all $P,R \in \calP$ with $P < R$ the manifold $ M_{\leq P}$ is a closed submanifold of $M_{\leq R}$ with
  $\dim M_{\leq P} < \dim M_{\leq R}$. In case $M$ and all the manifolds $M_{\leq P}$ are Lie groups or vector spaces
  we call the pair $\left( M,\left( M_{\leq P}\right)_{P\in \calP}\right)$ a  $\calP$-\emph{stratified  Lie group} or
  $\calP$-\emph{stratified vector space}, respectively.
  Observe that a $\calP$-stratified manifold $M$ carries a natural stratification induced by the pieces  
  $M_R := M_{\leq R} \setminus M_{\leq P}$ where $P,R$ are elements of $\calP$ such that there exists no element
  $Q\in \calP$  with $P < Q <R$. 
  
  For the poset $\calP = \{ 1 < 2 < \ldots < n-1 <n \} $ examples of such $\calP$-\emph{stratified} manifold
  are the vector space $\R^n$ with the family of subvector spaces $\left(\R^k\right)_{1\leq k \leq n}$
  and the torus  $\mathbb{T}^n = \mathbb{S}^n$ with the family of subtori $\left(\mathbb{T}^k\right)_{1\leq k \leq n}$.
\end{Example}

Finally, we arrive at the regularity conditions (a) and (b) a stratified space can satisfy and which carry the name of their
inventor, Whitney.  Essentially, the Whintey conditions are both tameness requirements and entail that one can 
perform geometric and analytic constructions on stratified spaces fulfilling those regularity conditions. Condition (b)
is stronger than condition (a). Based upon Whitney's ground breaking work, Thom and Mather derived the two famous isotopy
lemmas from the regularity condition (b) and showed that as a consequence of the first isotopy lemma  (b)-regular
stratified spaces are locally trivial, see \cite{Thom,MatNTS}. 

\begin{Definition}
\label{Def-WhitneyReg}
  Let $(X,\calC^\infty(X))$ be a differentiable stratified space and assume without loss of generality that
  $X$ is a closed subspace of $\R^n$ and that $\calC^\infty(X) \cong \calC^\infty(\R^n) / \calJ_X$.
  Let $x$ be a point of $X$, $R$ the stratum in which $x$ lies and $S$ a stratum such that $R$ is incident to $S$ i.e.\ such that $R \subset \overline{S}$. Then one says that $X$ fulfills Whitney's condition
  \begin{enumerate}[(a)]
  \item if for every sequence $(y_k)_{k\in \N}$ in $S$ converging to $x$ and such that
    the sequence of tangent spaces  $(T_{y_k}S)_{k\in \N}$ converges in the Gra{\ss}mannian to a subspace $T \subset \R^n$
    the relation
    \[ T_xR \subset T \]
    holds true, 
  \item if   for every pair of sequences $(x_k)_{k\in \N} \subset R$ and $(y_k)_{k\in \N} \subset S$
    such that
    \begin{enumerate}[({b}1)]
    \item  both sequences converge to the same point $x\in R$,
    \item  the sequence of lines $(\ell_k)_{k\in \N}$, where $\ell_k$ is the line through
      $x_k$ and $y_k$, converges in projective space to a line $\ell \subset \R^n$,
    \item the sequence of tangent spaces  $(T_{y_k}S)_{k\in \N}$ converges in the Gra{\ss}mannian to a
      subspace $T \subset \R^n$,  
    \end{enumerate}
    the relation
    \[ \ell \subset T \]
    holds true. 
  \end{enumerate}
  Instead of saying that $X$ satisfies Whitney's condition (a) or (b), one calls $X$ 
  (a)-regular or (b)-regular, respectively.
\end{Definition}

\begin{Example}
  Every real algebraic or real analytic variety carries a unique minimal (b)-regular stratification. 
\end{Example}

\begin{Example}
\label{ex:hilbert.map-whitney-stratification}  
  Assume that $G$ is a compact Lie group which acts by linear transformations from the right on $\R^d$.
  Let $X =\R^d /G$ be the orbit space with the quotient topology.
  Denote by $\C[x_1,\ldots,x_d]^G $ the algebra of $G$-invariant polynomials on $\R^d$ and let
  $p_1,\ldots,p_k :\R^d \to \C $ be a Hilbert basis or in other words a generating system for the
  algebra $\C[x_1,\ldots,x_d]^G$. Then the Hilbert map
  \[
    p : \R^d \to \C^k , \quad x \mapsto
    \begin{pmatrix}
      p_1(x) \\ \vdots \\ p_k(x)
    \end{pmatrix}
  \]
  induces a chart $\overline{p} :X \to \C^k$ for the differentiable stratified space
  $\big(X, \calC^\infty (\R^d)^G \big)$. Moreover, the orbit type stratification of
  $\overline{p} (X) \cong X$ fulfills Whitney's condition (b) and actually coincides with the minimal (b)
  regular stratification of the orbit space; see \cite{BierstoneOrbitspaces} and \cite[Sec.\ 4.4]{PflaumBook}. 
\end{Example}

%


\section{Stratified Fiber Bundles}
\label{Stratif-Bunds}


Here we fix a useful notion of stratified fiber bundles, in particular stratified vector bundles, that
suffices for the finite dimensional case. Our definition ensures that there is a useful stratified
tangent bundle on the total space of a stratified fiber bundle.

\begin{Definition}

  Let $X$ be a Whitney (a)-regular stratified subspace of some $\RR^n$, denote by $\calS$ its stratification and
  by $\leq$ the incidence relation of strata.
  Assume that $E$ is a differentiable space and $\pi:E\to X$ a surjective differentiable map.
  Let $F$ be an $\calS$-stratified manifold and $G$ an $\calS$-stratified Lie group such that
  in both cases the induced stratifications are (a)-regular. 
  We then call $\pi$ a \emph{differentiable stratified fiber bundle} with fiber $F$ and structure group $G$ if the
  following is true:

\begin{enumerate}[{\rm (SB1)}]
\item For each $S\in \calS$, denote by $\pi_S:E_S\to S$ the pullback of $E$ over $S$. Then each
  $\pi_S$ is a differentiable locally trivial fiber bundle over $S$ with standard fiber manifold $F_{\leq S}$
  and structure group $G_{\leq S}$.
\item The decomposition $\{E_S \mid S \in \cal S\}$ is an (a)-regular stratification of $E$.
\item The map $\pi:E\to X$ is a topological quotient map.
\end{enumerate}

In particular, when $F$ is an $\calS$-stratified vector space with each fiber $F_{\leq S}$ a vector space,
and each $G_{\leq S}$ a closed subgroup of $\mathrm{Gl}(F_S)$, we will say that $E\to X$ is a \emph{differentiable stratified vector bundle}.

\end{Definition}

\begin{Example}
In \cite{PflaumBook}, the stratified tangent bundle $TX$ of a stratified sapace $X$ is defined by taking the union of all the classical tangent bundles $TS$ for the strata $S$.
When $X$ is endowed with a smooth structure compatible with the stratification, $TX$ is topologized so that singular charts give local homeomorphisms to the appropriate union of subbundles of the tangent bundle of euclidean space.
Then, in Theorem 2.1.2, the observation has been made that this gives a differentiable stratified vector bundle in the above sense if and only if Whitney's condition (a) holds for the stratified space $X$.
\end{Example}

\begin{Example}
Let $E\to X$ be a stratified fiber bundle.
Then the requirement above that $E$ be at least Whitney (a) stratified ensures that $E$ has a stratified tangent bundle $TE\to E$ which then is itself a stratified vector bundle so that the definition is coherent.
\end{Example}

It is useful to compare to examples of bundles that are stratawise fiber bundles but which fail to be stratified fiber bundles in the sense we give here.

\begin{Example}
Let $X=\{ (x,y,z) \in \bbR^3 \mid x^2 = y^2 |z| \, \& \, \mathrm{sgn}(x)=\mathrm{sgn}(yz) \}$.
This space is similar to the Whitney umbrella, i.e. the (real) algebraic variety  $x^2=y^2z$, with one panel of the umbrella canopy folded down.
Decompose $X$ into the $z$-axis $X_1$ and the remainder of the space $X_2=X\setminus X_1$.
Now consider the union $TX = TX_1 \cup TX_2$, where the union is taken in $T\bbR^3$ and where $TX$ has the subspace topology.
The restriction of the tangent bundle projection $T\bbR^3 \to \bbR^3$ to $TX\to X$ gives a bundle over $X$.
Moreover, on each stratum this is the usual tangent bundle $TX_i \to X_i$, which is a fiber bundle, in particular a vector bundle.
However, $TX\to X$ is not a quotient map.
	To see this, note that any nonzero tangent vector $v$ at the origin is in the inverse image of the closure of $X_2$ under the projection map, however, this vector is not in the closure of the inverse image of $X_2$ as this decomposition fails the Whitney (a)-regularity condition.
      \end{Example}
      
We remark here that other authors consider various notions of stratified bundles.
Note that in \cite{RossStratVectBunds} and \cite{ScarlettSmoothStratVectBunds} smooth structures are explicitly considered as part of the underlying structure, with the latter in particular defining a \emph{smooth stratified vector bundle} as a topological quotient map of stratawise smooth vector bundles with the additional requirement that there exist singular charts exhibiting the bundle as locally glued together from pullbacks of smooth vector bundles on open subsets of $\RR^n$ and with transition maps given locally by restrictions of smooth vector bundle isomorphisms.
The definition we give here is thus stronger than that of \cite{RossStratVectBunds}, which can include stratified tangent bundles over non Whitney (a) spaces, and is weaker than that of $\cite{ScarlettSmoothStratVectBunds}$, though every example we consider will satisfy the stronger requirements as well.
A definition is given in \cite{BauesFerrario} for stratifications in the form of filtrations by CW-pairs, and with fibers specified by families picked out via small functors whose image lies in locally compact spaces.
In the case of spaces stratified into finite dimensional smooth manifolds considered here, this should be a special case.

Now we consider maps between differentiable stratified fiber bundles.

\begin{Definition}
Let $E\to X$ and $E'\to X$ be two stratified fiber bundles over a stratified space $X$.
Then a bundle morphism $f:E\to E'$ is a \emph{morphism of stratified fiber bundles} if the following holds:

\begin{enumerate}[{\rm (SB1)}]
\setcounter{enumi}{3}
\item On the total space of each smooth fiber bundle over $S\in \calS$, the map $f_S: E_S\to E'_S$ given by
  restriction is a morphism of smooth fiber bundles.
\item In local coordinates, $f$ extends to a smooth function.
\end{enumerate}

In particular, if $f$ is an injection we will say that $E$ is a \emph{stratified subbundle} of $E'$.

\end{Definition}

Since here we ask that each $E_S\to S$ be a smooth fiber bundle in the classical sense, that is to say $E_S$ does not come
equipped with a further stratification, asking that a map commute with the bundle projections $E_S\to S$ suffices to ensure
that the map is also a stratified map, hence the codomain of the $f_S$ does indeed land in $E'_S$.

\begin{Example}
Let $\pi:E\to X$ be a stratified fiber bundle where $X$ is a subset of $\RR^n$.
Then $T\pi:TE\to TX$, given by taking the total derivative stratawise, is a morphism, which we call the
\emph{stratified derivative} or \emph{stratified pushforward} associated to $\pi$.
\end{Example}

We wish to consider conditions that allow for taking kernels stratawise for $T\pi$ to obtain a subbundle and an inclusion
morphism. To that end we introduce the notion of Artin gluing data for fiber bundles over stratified spaces.
Let $(X,\calS)$ be a differentiable stratified space and let $E\to X$ be a map which is stratawise a smooth fiber bundle.
Let $\Gamma E_S$ denote the sheaf of sections of $E\to X$ pulled back to $S$ along the inclusion $S\hookrightarrow X$.
For each $R<S$ one has the bundle $f_{RS}\Gamma E_S$ on $R$ of \emph{fringe germs} whose elements can be represented by
taking open subsets $U$ of $X$ around a point $x\in R$ and considering sections of $E\to X$ over $U\cap S$ with two
sections equivalent when they agree on an open subset of their intersection.
Note that these sections are not required to be defined at $x$.

\begin{Definition}
	The \emph{induced Artin gluing data} for the sheaves $\Gamma E_S$, $S\in \calS$ consist  of  the maps 
	\[
		g_{RS}: \Gamma E_R \to f_{RS}\Gamma E_S  \quad \text{for } R,S \in \calS \text{ with } R< S 
	\]
	given by mapping a germ of a section $\sigma:U\to E$ at some $x\in R$ to the fringe germ at $x$ of $\sigma|_{U\cap S}$.
	Note that the maps $g_{RS}$ with $R< S$ are compatible in the sense that when $Q<R<S$, given a section $\sigma$ on $U$ with germ
        $[\sigma]_x$ at $x\in Q$ there is a natural inclusion map $f_{_{QS}} \Gamma E_S\to f_{_{QR}}f_{RS}\Gamma E_S$ and these maps commute up to these inclusions.

	The \emph{induced Artin gluing data} for the bundles $E_S$ consist of the above $g_{RS}$ with $R<S$ 
	together with the evaluation maps 
	\[
		\epsilon_S: \Gamma E_S \to E_S \quad \text{for } S \in \calS \ ,
	\]
	given by taking a germ of some function $\sigma:U\to E$ at $x$ to $\sigma(x)$.

	We say that Artin gluing data are \emph{injective} if for each stratum $R$ the maps $g_{RS}$ with $R<S$ are
        jointly injective and the maps $\epsilon_S$, $S\in \calS$ are  surjective.

        Hereby, when we say that for fixed $R$ the family $\left(g_{RS}\right)_{R<S}$ is \emph{jointly injective} we mean
        that if $x,y \in R $ and $g_{RS} (x) = g_{RS} (y)$ for all strata $S$ with $R<S$ then $x=y$. 
	
\end{Definition}

Artin gluing data are then the coherent specification of which fringe germs at $x$ which arise from actual germs at $x$, as well as how the sections give elements of the fiber bundles.
Given abstract Artin gluing data between arbitrary sheaves (or general bundles) on the strata $S\in \calS$ one can construct a sheaf (or bundle) on the space $X$ whose
restriction to each stratum $S$ is the given sheaf and whose induced Artin gluing data are equivalent to the given data, as in \cite{JordanThesis}.
Note that if there are two strata $R$ and $S$ which are not adjacent, then $f_{RS}$ is the trivial functor taking every
sheaf on $S$ to the terminal sheaf. In such a case, the map $g_{RS}$ is certainly not injective.
Moreover, it is possible that two sections might agree on a lower stratum but disagree on a higher one. 
However, in our case where we consider only adjacent strata and sheaves on the strata given by restrictions of sections over
$X$, we can show that this does not occur.

\ble
	Let $X$ be a stratified space and $T \to X$ a surjective continuous map.
	Then the maps $g_{RS}$ for the induced Artin gluing data on the sheaves are jointly injective.
\ele

%

\bpf
	Let $\sigma,\sigma':U\to T$ be two sections with U a neighborhood $x\in R$, where $U$ intersects only finitely many strata.
	Denote the germs of these sections at $x$ by $\sigma_x$ and $\sigma'_x$ respectively.
	Suppose that for all $S>R$, $g_{RS}(\sigma_x) = g_{RS}(\sigma'_x)$ i.e.\ that the sections give the same fringe germs at $x$ for all higher strata.
	This means exactly that for each $S$ there is a neighborhood $W_S\subset U$ of x for which $\sigma\mid_{W_S\cap S} = \sigma'\mid_{W_S\cap S}$.
	Let $W=\int W_S$. Since there are only finitely many $S$ such that $S>R$, $W$ is open.
	Then $\sigma$ and $\sigma'$ coincide on all $W\cap S$.
	By the frontier condition, $W\cap R\subset \overline{W\cap S}$ for all $S > R$, hence $\sigma$ and $\sigma'$ agree on all of $W$.
	Thus, the germs of $\sigma$ and $\sigma'$ at $x$ are equal.
\epf

For a stratified but not Whitney (a)-stratified subset $X$ of $\RR^n$ and for the union of the tangent bundles of the strata, the maps $g_{RS}$ are all injective but the maps $\epsilon_S$ are not surjective.
Hence, there is a sense in which the fibers over lower strata can not be included into the fibers of the higher strata.

\ble
	Let $E\to X$ be a stratified fiber bundle.
	Let $V\hookrightarrow E$ be a subbundle over $X$ such that stratawise $V_S\hookrightarrow E_S$ is a
        differentiable fiber subbundle.	In particular, $V\subset E$ carries the subspace topology.
	Then the following are equivalent:

	\begin{enumerate}[(1)]
		\item $V\to X$ is a stratified subbundle of $E \to X$.
		\item The induced map $V\to X$ is a topological quotient map and $\{V_S\}$ is a stratification of $V$.
		\item For every pair of strata $R < S$, every element $v\in V_R$ is a limit of a sequence in $V_S$.
		\item The induced Artin gluing data are injective.
	\end{enumerate}

\ele

\bpf
	The second condition is part of the definition we gave for differentiable fiber bundles here, and $T\to X$ is already assumed to be a differentiable stratified fiber bundle with subspace $V$, hence the first two statements are equivalent.
	The third condition is just an equivalent definition of the quotient topology.
	
	Consider now the last condition.
	Note that in order for the maps $\Gamma E_S\to E_S$ to be surjective, it must be that every element of $E_S$ is in the image of some local section, and this implies the third condition.
	So, it remains to show that injectivity of the Artin gluing data follows from the other conditions.
	Suppose now that $V\to X$ is a differentiable stratified subbundle of $\to X$, with fiber $F$.
	Note that by the prior lemma, we only need to show that the evaluation maps are surjective.
	That is, we need that every element $e\in E_R$ over a point $x\in R$ is in the image of some continuous section around $x$.
	Let $U$ be a neighborhood of some $x\in R$ such that each $U_Q = U\cap Q$ for $Q\geq R$ trivializes the fiber bundle, via some isomorphism $\phi_Q$ with $U_Q \times F_{\leq Q}$.
	Note that by local finiteness, such a $U$ is guaranteed to exist.
	Let the section $\sigma$ be given locally on $U$ by the constant function with value $\phi_Q (e)$.
	Then by construction, $\epsilon_R(\sigma_x)=e$, where $\sigma_x$ is the germ of sigma at $x$.
\epf

Hence, in order for the kernels of the $T\pi$ over the various strata to form a subbundle we must have that every element
in the kernel of $T\pi_R$ is the limit of a sequence in the kernel of $T\pi_S$ whenever $R<S$.
That is, the map $T\pi$ must satisfy Thom's functional regularity condition $(a_f)$.
We show this now.

\ble
Let $\pi:E\to X$ be a stratified fiber bundle and $T\pi:TE\to TX$ the stratified pushforward.
Let $VE_S\to TE_S$ be the kernel of $T\pi_S:TE_S\to TS$ for each stratum $S$ endowed with the subspace topology.
Let $v$ in $VE_R$ be given, i.e.\ assume that $T\pi_R(v) = 0$.
Then for any stratum $S$ adjacent to $R$, there is a sequence $(v_k)_{k\in \N}$ in $VE_S$ converging to $v$.

\ele

\bpf
Let $x=\pi(e)$ be in the stratum $R$ and let $S>R$ be a higher stratum adjacent to $R$.
Let $U$ be a neighborhood of $x$ such that $U_R=U\cap R$ and $U_S=U\cap S$, respesctively, trivialize $E_R$ and $E_S$.
That is, the pullbacks of $E$ over these open sets are isomorphic to the trivial bundles $U_R\times F_{\leq R} \to U_R$
and $U_S\times F_{\leq S} \to U_S$, respectively, and the pullback of $TE$ over these are
	\[
		TU_R \oplus TF_{\leq R} \to U_R \times F_{\leq R}
	\]

	and

	\[
		TU_S \oplus TF_{\leq S} \to U_S\times F_{\leq S} \ ,
	\]
        respectively. The kernels $VE_R$ and $VE_S$ are isomorphic to $TF_{\leq R}$ and $TF_{\leq S}$, respectively.

	Under the trivialization, $e$ is represented by a pair $(x,f)$ and $v$ is represented by a differentiable path $\gamma$ in $U_R\times F_{\leq R}$ centered at $(x,f)$.
	By the frontier condition on the decomposition of $E$, there is some sequence $(f_k)_{k\in \N}$ in $E_S$ with
        limit $f$, and without loss of generality we may assume that it is contained in $U_S$.

	We may assume, after shrinking $U$ if necessary, that the tangent bundle $TE$ is also trivialized over
        $\pi^{-1}(U_R)$ and $\pi^{-1}(U_S)$.
	Hence, $v$ is represented by some $(f,\nu) \in VE_R\subset F_{\leq R} \times \bbR^d$, where $d$ is the dimension of $F_{\leq R}$.
	$TF_{\leq R}$ embeds in $TF_{\leq S}$ in a way compatible with the trivializations, hence the sequence
        $(f_k,\nu)_{k\in \N}$ with limit $(f,\nu)$ represents a sequence $(v_k)_{k\in \N}$ with limit $v$.

\epf

%


\section{Stratified Connections}
\label{Stratif-Conns}


We first recall the definition of an Ehresmann connection on a smooth
fiber bundle $\pi:E\to M$ and then consider a stratified generalization.
Assuming that $\pi:E\to M$ is a smooth map between finite dimensional manifolds one has the tangent map $T\pi : TE\to TM$.
The map $T\pi$ is a smooth vector bundle morphism, hence has a kernel $\ker T\pi$ which by definition coincides with
the verticle bundle $VE \hookrightarrow TE$.
In this context a connection is simply a smooth choice of a splitting vector bundle $HE\to TE$, i.e.

\begin{Definition}
An \emph{Ehresmann connection} on $\pi$ is a splitting of the short exact sequence
\[
  0 \longrightarrow VE \longrightarrow TE \overset{T\pi}{\longrightarrow} TM \to 0 
\]
in the category of smooth vector bundles.

That is, for each $x\in M$, $e\in E_x$ one has a vector subspace $HE_e$ of $T_eE$ isomorphic to $T_xM$ via $T\pi$ and
such that $TE_e=VE_e\oplus HE_e$. Taken together the $H_eE$ form a smooth vector subbundle $HE$ of $TE$ over $E$
which is complementary to $VE$.
\end{Definition}

If $E\to M$ is locally trivial with fiber $F$, then $TE\to TM$ is locally trivial as well, and we have also that $V_eE$ is
isomorphic to $TE_x$.
By considering the associated $G$-principal bundle where $G$ is the structure group of $E\to M$, the above is equivalent to
choosing a smooth $1$-form $\omega$ on $M$ valued in the Lie algebra $\g$ of $G$. 
Then $HE$ is the (image of the) kernel of $\omega$.

Now suppose that $X$ is Whitney (a) stratified space with stratification $\calS$, and that $E\to X$ is a stratified
fiber bundle over $\calS$.

Since $VE$ is a stratified subbundle, we have another exact sequence
\[
	0 \longrightarrow VE \longrightarrow TE\longrightarrow TX\longrightarrow 0 \ ,
\]
this time in the category of stratified vector bundles over $\calS$.

A stratified connection should be again a splitting of this sequence.
\begin{Definition}
Let $\pi:E\to X$ be a stratified fiber bundle.
A \emph{stratified Ehresmann connection} on $\pi$ is a splitting of 
\[
  0 \longrightarrow VE \longrightarrow TE \overset{T\pi}{\longrightarrow} TX \longrightarrow 0 
\]
in the category of stratified vector bundles over $\calS$.
That is, a stratified Ehresmann connection is  a stratified vector subbundle $HE\to TE$ such that $HE\cong TX$ and
$TE=VE\oplus TX$.
\end{Definition}

Note that the splitting must respect the stratification in order that $HE$ be a stratified vector subbundle.
In terms of the connection $1$-forms $\omega_R$ and $\omega_S$ when $R,S$ are strata of $X$ with $R<S$, this means that
every element in the kernel of $\omega_R$ must be a limit of elements of $\omega_S$.

Thus, we have the following

\begin{Lemma}
\label{StratConn-lem}
	Let $\pi:E\to X$ be a stratified fiber bundle over $X$ and $VE\to TE$ be the kernel of $T\pi:TE\to TX$.
	Assume that for each stratum $S$ of $X$ the bundle  $HE_S$ is a connection on $E_S\to S$.
	Then the union \[ HE = \bigcup_{S\in \calS}\bigcup_{e\in E_S} H_eE_S\] constitutes a stratified connection on
        $E\to X$ if and only if every $h\in HE_R$ with $R <S$, $R,S\in \calS$ is a limit of a sequence
        $(t_k)_{k\in\N}$ in $HE_S$.
\end{Lemma}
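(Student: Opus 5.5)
The plan is to reduce Lemma \ref{StratConn-lem} to the criterion for stratified subbundles established earlier in this appendix. By definition, $HE$ is a stratified connection precisely when three things hold. First, $HE$ is a stratified vector subbundle of $TE$. Second, $T\pi$ restricts to an isomorphism $HE\cong TX$. Third, $TE = VE\oplus HE$ holds as stratified vector bundles.

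The stratawise parts of the second and third conditions are hypotheses. On each $E_S$ the bundle $HE_S$ is an Ehresmann connection, so $T_eE_S = V_eE_S\oplus H_eE_S$ and $T\pi_S|_{H_eE_S}: H_eE_S\to T_{\pi(e)}S$ is an isomorphism for every $e\in E_S$. Moreover, $VE$ is already known to be a stratified subbundle by the preceding lemma on Thom's condition $(a_f)$. So the equivalence should come down to the single question of whether the union $HE$ is a stratified subbundle of $TE$.

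For the direction ``$\Rightarrow$'' I will assume $HE$ is a stratified subbundle. Then the lemma characterizing stratified subbundles, its condition (3), directly says that every $h\in HE_R$ is a limit of a sequence in $HE_S$ whenever $R<S$. This direction is immediate.

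For ``$\Leftarrow$'' I will apply the same characterization lemma to $V=HE\subset E'=TE$. Its standing hypothesis, that $HE_S\hookrightarrow TE_S$ is stratawise a smooth subbundle, holds because each $HE_S$ is a smooth connection. The assumed approximation property is exactly condition (3), so the lemma yields (1): $HE$ is a stratified subbundle. It then remains to upgrade the stratawise data to statements in the stratified category.
\begin{itemize}
\item The map $T\pi|_{HE}: HE\to TX$ is a morphism of stratified bundles. It is stratawise a smooth bundle isomorphism, and it extends smoothly in local coordinates because $T\pi$ does.
\item Its inverse is stratawise the horizontal lift, and it is bijective on fibers over each stratum, hence bijective overall.
\item The addition map $VE\oplus HE\to TE$ is stratawise an isomorphism and is continuous, since it is the restriction of addition in the ambient $T\RR^N$.
\end{itemize}
Together these show that $HE$ splits the exact sequence.

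The main obstacle I expect is continuity, respectively smooth extendability, of the inverse maps across strata: the horizontal lift $TX\to HE$ and the projection $TE\to VE$. Stratawise smoothness does not by itself give continuity at lower strata. My first attempt would be to use that $HE$ and $VE$ are both closed in $TE$, which follows from condition (2) of the characterization lemma, combined with fiberwise dimension counting. Suppose $t_k\to t$ with $t_k=v_k+h_k$. If the $v_k$ and $h_k$ stayed bounded, I could pass to convergent subsequences whose limits lie in $VE$ and $HE$, and uniqueness of the decomposition at the limit point would give continuity. Boundedness itself needs an argument: if the components diverged, rescaling them would produce nonzero limits $v\in VE$ and $h\in HE$ with $v+h=0$, contradicting $V_eE\cap H_eE=0$. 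Should the definition demand smoothness rather than mere continuity, I would instead work in the local trivializations from the proof of the $(a_f)$ lemma and express the projections through the stratawise connection forms.
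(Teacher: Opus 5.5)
Your core argument is the same as the paper's. The paper gives no separate proof: it observes that a stratified connection amounts to $HE$ being a stratified subbundle of $TE$, and reads the lemma off from (1)$\Leftrightarrow$(3) in the characterization of stratified subbundles. That is exactly your ``$\Rightarrow$'' direction and the first half of your ``$\Leftarrow$'' direction.

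The step you add afterwards, upgrading the stratawise splitting to a continuous one, is wrong.
\begin{itemize}
\item Closedness of $HE$ and $VE$ in $TE$ does not follow from condition (2). That condition only gives the quotient property and the frontier condition.
\item Even granting closedness, the rescaled limits $v=\lim v_k/|v_k|$ and $h=-v$ lie in the limit of the tangent spaces $T_{e_k}E_S$. Whitney (a) only says this limit \emph{contains} $T_eE_R$. So $v$ need not lie in $TE$ at all, and $V_eE\cap H_eE=0$ cannot be invoked.
\end{itemize}

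In fact the approximation hypothesis gives no control over the splitting. Take $n=1$, i.e.\ $\pi_1:\C\to[0,\infty)$, $a\mapsto |a|^2$. On $\C^*$ choose the $\U(1)$-invariant horizontal lines $\RR\,(a+\imi a/|a|)$, which are complementary to the vertical lines $\RR\,\imi a$. Over the origin $HE_R$ consists only of the zero vector, so the hypothesis of the lemma holds trivially. Now let $a_k\to 0$. The vectors $t_k=a_k\in T_{a_k}\C$ tend to the zero vector at $0$. Their components $v_k=-\imi a_k/|a_k|$ and $h_k=a_k+\imi a_k/|a_k|$ nevertheless have norm at least $1$. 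Likewise $h_k$ is the horizontal lift of $2|a_k|^2\in T_{|a_k|^2}X$, and these base vectors tend to the zero vector at $0$, so the horizontal lift is discontinuous at the origin.

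Hence, if ``stratified connection'' requires the horizontal lift and the projection $TE\to VE$ to be continuous (let alone smooth in local coordinates), the ``$\Leftarrow$'' direction is false. No repair of your obstacle paragraph is then possible.

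The lemma holds in the reading implicit in the paper's derivation. There, a stratified connection is a stratified subbundle $HE\subset TE$ which, on each stratum, is complementary to $VE_S$ and is mapped isomorphically onto $\pi^*TS$ by $T\pi$. In that reading your proof is complete once you invoke (3)$\Rightarrow$(1). The bullet list and the obstacle paragraph should be deleted rather than fixed.
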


This makes the stratawise splitting of the vector bundle $TE=VE\oplus HE$ coherent in the sense that one understands when
a decomposition $w_e=v_e+h_e$ of some $w_e\in T_e E_R=V_e E_R \oplus H_e E_R$ is given as the limit of a
sequence $w_k=v_k+h_k$, $k\in \N$ of elements of the bundle over a higher stratum $TE_S=VE_S\oplus HE_S$.
Note that unlike in the case of the $VE_S$ we do not automatically have a globally defined $1$-form $\omega$ to take the
kernel as each $\omega_S$ has codomain in the Lie algebra of $G_S$.
However, Artin gluing data induce Lie group embeddings $G_R\to G_S$ for an (a)-regular stratified fiber bundle. In this case
we therefore have an embedding of Lie algebras as well and this allows us to define $\omega$ globally with codomain the Lie
algebra of the Lie group colimit of the $G_R\to G_S$.
Note that even if the $G_S$ are all finite dimensional Lie groups but the stratification is unbounded then the directed
colimit might not only be infinite dimensional but even can fail to be, say, Fr\'echet even though all the $G_S$ are so.
Whenever the stratification has a maximal element, though, e.g.\ when a dense open top stratum exists, then the directed colimit is
simply the structure group for the fiber bundle over the top stratum, hence is within the same category of Lie groups
respectively Lie algebras.


\end{document}